\tikzstyle directed=[postaction={decorate,decoration={markings,
    mark=at position #1 with {\arrow{>}}}}]
\tikzstyle rdirected=[postaction={decorate,decoration={markings,
    mark=at position #1 with {\arrow{<}}}}]
\tikzset{anchorbase/.style={baseline={([yshift=-0.5ex]current bounding box.center)}}}
\tikzset{
    partial ellipse/.style args={#1:#2:#3}{
        insert path={+ (#1:#3) arc (#1:#2:#3)}
    }
  }
\renewcommand{\hat}{\widehat}
\newcommand{\symG}{\Sigma}
\newcommand{\Perm}{\symG}
\newcommand{\ie}{{\em i.e.}}
\newcommand{\eg}{{\em e.g.}}
\newcommand{\He}{\mathcal{H}}
\newcommand{\Br}{B}
\newcommand{\Z}{\mathbb{Z}}
\newcommand{\Tr}{\mathrm{Tr}}
\newcommand{\idx}{\operatorname{idx}}
\newcommand{\projlength}{\mathrm{projlength}}
\newcommand{\val}{\mathrm{val}}
\newcommand{\R}{\mathbb{R}}
\newcommand{\Hecke}{\mathcal{H}}
\newcommand{\poly}{\operatorname{poly}}
\newcommand{\tw}{\operatorname{tw}}
\newcommand{\hecke}{\mathtt{hecke}}
\newcommand{\id}{\operatorname{id}}
\title{A fast algorithm for the Hecke representation of the braid group, and applications to the computation of the HOMFLY-PT polynomial and the search for interesting braids}
\titlerunning{A fast algorithm for the Hecke representation, and applications}
\author{Clément Maria}{Inria d'Université Côte d'Azur, France \and \url{https://www-sop.inria.fr/members/Clement.Maria/} }{ clement.maria@inria.fr}{https://orcid.org/0000-0002-2007-2584}{Partially supported by the ANR project ANR-20-CE48-0007 (AlgoKnot).}
\author{Hoel Queffelec}{France-Australia Mathematical Sciences and Interactions ANU - CNRS International Research Laboratory, Australia \and \url{https://imag.umontpellier.fr/~queffelec/}}{hoel.queffelec@cnrs.fr}{}{}
\authorrunning{C. Maria and H. Queffelec} 
\keywords{Hecke representation of the braid group; parameterized algorithm; HOMFLY-PT polynomial of knots; reservoir sampling; faithfulness of Hecke representation} 
\begin{document}

\maketitle

\begin{abstract}
Knot theory is an active field of mathematics, in which combinatorial and computational methods play an important role. One side of computational knot theory, that has gained interest in recent years, both for complexity analysis and practical algorithms, is quantum topology and the computation of topological invariants issued from the theory.  

In this article, we leverage the rigidity brought by the representation-theoretic origins of the quantum invariants for algorithmic purposes. We do so by exploiting braids and the algebraic properties of the braid group to describe, analyze, and implement a fast algorithm to compute the Hecke representation of the braid group. We apply this construction to design a parameterized algorithm to compute the HOMFLY-PT polynomial of knots, and demonstrate its interest experimentally. Finally, we combine our fast Hecke representation algorithm with Garside theory, to implement a reservoir sampling search and find non-trivial braids with trivial Hecke representations with coefficients in $\mathbb{Z}/p\mathbb{Z}$. We find several such braids, in particular proving that the Hecke representation of $B_5$ with $\mathbb{Z}/2\mathbb{Z}$ coefficients is non-faithful, a previously unknown fact.  
\end{abstract}

\section{Introduction}
Geometrically, a {\em braid} on $n$ strands is the embedding of $n$ non-intersecting paths in the 3-dimensional space $\R^2 \times [0,1]$, such that every path connects a point $(i,0,0), i \in \{1,\ldots,n\}$ in the bottom plane to a point $(j,0,1), j \in \{1,\ldots,n\}$ in the top plane, and every path grows monotonically along the z-axis. Two braids are {\em equivalent} if there is an ambient isotopy of $\R^2 \times [0,1]$ fixing the bottom and top planes and taking one braid to the other. Braids are generally represented by {\em braid diagrams}, that are planar projections of a braid along the y-axis, keeping track of upper and under crossings; see Figure~\ref{fig:intro_fig}.

Braids are notably important in knot theory, as any link can be represented as the closure of a braid~\cite{Alexander1923}. Knots have been studied extensively under the algorithmic lens. A famous problem is the computational complexity of recognizing the {\em trivial knot} from an input diagram, which is known to be in the complexity classes ${\bf NP}$~\cite{Hass1999} and ${\bf coNP}$~\cite{Lackenby2021}, for which the best known worst case algorithm is exponential~\cite{Hass1999}, but which experimentally exhibits a fast polynomial time behavior~\cite{burton2014fastbranchingalgorithmunknot} with optimized implementation. In particular, the experimental aspects of computational knot theory play a fundamental role in the field, where mathematicians and computer scientists use efficient software, such as {\tt Regina}~\cite{burton04-regina,regina} and {\tt SnapPy}~\cite{snappy}, as well as computer-constructed census of knots~\cite{burton:LIPIcs.SoCG.2020.25}, to guess and challenge profound conjectures, \eg\cite{BH,AIF_0__0_0_A159_0,Garoufalidis2005}. Consequently, an important side of computational knot theory is the design and implementation of fast, highly optimized algorithms.

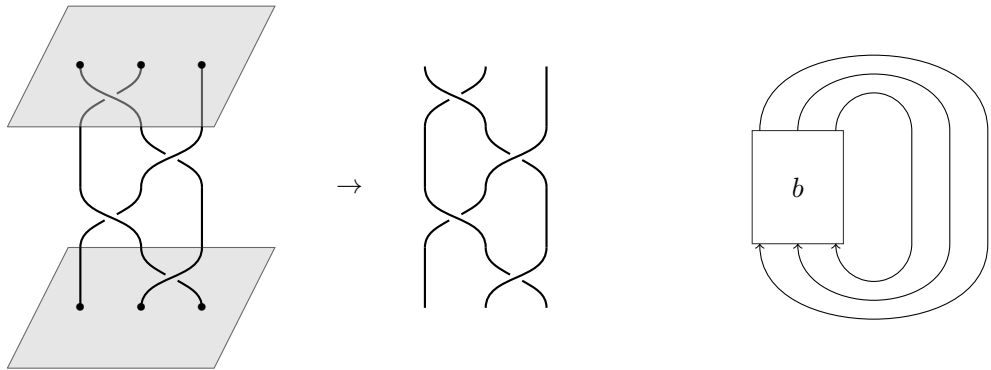
\begin{figure}[t!]
  \centering
  \begin{subfigure}[b]{0.69\textwidth}
    \centering
    \[
    \begin{tikzpicture}[anchorbase,scale=0.8]
      \draw [opacity=.6,fill opacity=.1,fill=black] (-1.2,-1) -- (2.2,-1) -- (3.2,1) -- (-.2,1) -- (-1.2,-1);
      \node at (0,0) {\textbullet};
      \node at (1,0) {\textbullet};
      \node at (2,0) {\textbullet};
      \draw [thick] (0,0) -- (0,1);
      \draw [thick] (1,0) to [out=90,in=-90] (2,1);
      \draw [thick] (2,0) to [out=90,in=-30] (1.6,.45);
      \draw [thick] (1.4,.55) to [out=150,in=-90] (1,1);
      \draw[thick] (0,1) to [out=90,in=-150] (.4,1.45);
      \draw [thick] (.6,1.55) to [out=30,in=-90] (1,2);
      \draw [thick] (1,1) to [out=90,in=-90] (0,2);
      \draw [thick] (2,1) -- (2,2);
      \draw [thick] (0,2) -- (0,3);
      \draw [thick] (1,2) to [out=90,in=-90] (2,3);
      \draw [thick] (2,2) to [out=90,in=-30] (1.6,2.45);
      \draw [thick] (1.4,2.55) to [out=150,in=-90] (1,3);
      \draw[thick,opacity=.6] (0,3) to [out=90,in=-150] (.4,3.45);
      \draw [thick,opacity=.6] (.6,3.55) to [out=30,in=-90] (1,4);
      \draw [thick,opacity=.6] (1,3) to [out=90,in=-90] (0,4);
      \draw [thick,opacity=.6] (2,3) -- (2,4);
      \draw [opacity=.6,fill opacity=.1,fill=black] (-1.2,3) -- (2.2,3) -- (3.2,5) -- (-.2,5) -- (-1.2,3);
      \node at (0,4) {\textbullet};
      \node at (1,4) {\textbullet};
      \node at (2,4) {\textbullet};
    \end{tikzpicture}
    \qquad \rightarrow \qquad
    \begin{tikzpicture}[anchorbase,scale=0.8]
      \draw [thick] (0,0) -- (0,1);
      \draw [thick] (1,0) to [out=90,in=-90] (2,1);
      \draw [thick] (2,0) to [out=90,in=-30] (1.6,.45);
      \draw [thick] (1.4,.55) to [out=150,in=-90] (1,1);
      \draw[thick] (0,1) to [out=90,in=-150] (.4,1.45);
      \draw [thick] (.6,1.55) to [out=30,in=-90] (1,2);
      \draw [thick] (1,1) to [out=90,in=-90] (0,2);
      \draw [thick] (2,1) -- (2,2);
      \draw [thick] (0,2) -- (0,3);
      \draw [thick] (1,2) to [out=90,in=-90] (2,3);
      \draw [thick] (2,2) to [out=90,in=-30] (1.6,2.45);
      \draw [thick] (1.4,2.55) to [out=150,in=-90] (1,3);
      \draw[thick] (0,3) to [out=90,in=-150] (.4,3.45);
      \draw [thick] (.6,3.55) to [out=30,in=-90] (1,4);
      \draw [thick] (1,3) to [out=90,in=-90] (0,4);
      \draw [thick] (2,3) -- (2,4);
    \end{tikzpicture}
    \]
  \end{subfigure}\hfill
  \begin{subfigure}[b]{0.3\textwidth}
    \centering
    \[
    \begin{tikzpicture}[anchorbase,scale=.5]
      \draw [opacity=.8] (-.2,0) rectangle (2.2,3);
      \node at (1,1.5) {\large $b$};
      \draw [->] (0,3) to [out=90,in=180] (3,5) to [out=0,in=90] (6,3) -- (6,0) to [out=-90,in=0] (3,-2) to [out=180,in=-90] (0,0);
      \draw [->] (1,3) to [out=90,in=180] (3,4.5) to [out=0,in=90] (5,3) -- (5,0) to [out=-90,in=0] (3,-1.5) to [out=180,in=-90] (1,0);
      \draw [->] (2,3) to [out=90,in=180] (3,4) to [out=0,in=90] (4,3) -- (4,0) to [out=-90,in=0] (3,-1) to [out=180,in=-90] (2,0);
    \end{tikzpicture}
    \]
  \end{subfigure}
\caption{Left: Geometric braid on $3$ strands in $\R^2 \times [0;1]$, and natural diagram (projection on $xz$-plane). Right: Braid closure $\hat{b}$ of braid $b$.}
\label{fig:intro_fig}
\end{figure}

Motivated by a finer understanding of the complexity of problems related to knots as well as practically fast computation, a recent route of research uses tools from {\em parameterized complexity} to compute {\em topological invariants} of knots. This approach has been particularly successful for invariants constructed via {\em quantum topology}, a field of topology using tools from quantum mechanics. In particular, the {\em treewidth} of a graph is a parameter measuring how close the graph is to a tree and, similarly, the {\em pathwidth} measures the proximity to a path. They can be extended to knot theory by considering the graph obtained from a diagram by putting a vertex on each crossing and an edge for each strand connecting crossings.

These are important parameters capturing a certain notion of sparsity of the input, and they can be combined with algorithmic techniques such as {\em dynamic programming} to design algorithms whose complexity depends exponentially on the tree/pathwidth and only polynomially in the size of the input: in quantum topology, such algorithms have been designed for the Jones and Kauffman polynomials~\cite{MAKOWSKY2003742}, the Reshetikhin-Turaev invariants~\cite{maria:LIPIcs.SoCG.2021.53}, and the HOMFLY-PT polynomial~\cite{burton:LIPIcs.SoCG.2018.18}, which are {\bf \#P-hard} to compute in general~\cite{Kuperberg2015}. Note that these quantum invariants, together with theoretically and practically fast algorithms, have been applied to construct knot censuses~\cite{burton:LIPIcs.SoCG.2020.25}. 

Contrary to knots, braids on $n$-strands have a natural algebraic description as a group, the {\em braid group} $B_n$, yielding rich algebraic properties. To our knowledge, braids have not received the same attention from the computer science community compared to knots.

\subparagraph*{Our results.} The goal of this article is to describe fast algorithms and data structures for braids and the braid group, related to quantum topology, and to apply them to computational knot theory and experimental mathematics. As opposed to algorithms on knots mentioned above, these algorithms rely heavily on the algebraic structure of the braid group. 

Our starting point is to consider the {\em Hecke representation} of a braid. The {\em Hecke algebra} $\Hecke_n$ is a fundamental concept in modern mathematics~\cite{Mathas_book}, deeply connected to group theory, number theory, and knot theory. The braid group $B_n$ admits a representation into the Hecke algebra $\Hecke_n$, \ie, a map $B_n \to \Hecke$ respecting the group structure. Knowing whether this map (with $\Z$-coefficients) is \emph{faithful} (\ie, only the trivial braid has trivial image) is a major open question~\cite{Squier,Jones_Hecke,LX,Bigelow_IH}, related to the detection of the unknot by the Jones polynomial~\cite{Ito}.

Our algorithm consists of scanning an input braid diagram from bottom to top, and updating the representation of the braid in a basis of $\Hecke$. This natural strategy of computation has already been used by Hugh Morton in 1985~\cite{Morton_code}; the program that results from our work can be seen as a modern and optimized implementation of Morton's high level ideas.

\begin{theorem}[see Theorem~\ref{thm:complexity_hecke}]\label{thm:main_hecke}
Given a braid $b \in B_n$ on $n$ strands and with $N$ crossings, there is an algorithm to compute its Hecke representation in $O(n! (n\log n) \times N)$ operations and $O(n! \times N)$ algebraic operations, storing $n! + O(1)$ algebraic elements. 
\end{theorem}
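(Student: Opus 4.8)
The plan is to represent the running image $\rho(b)\in\Hecke_n$ as a coefficient vector $(c_w)_{w\in\Perm_n}$ in the standard basis $\{T_w\}$ of $\Hecke_n$, stored in a flat array of length $n!$ addressed through a bijection $\idx\colon\Perm_n\to\{0,\dots,n!-1\}$, and to process the crossings of $b=\sigma_{i_1}^{\epsilon_1}\cdots\sigma_{i_N}^{\epsilon_N}$ one at a time. Since $\rho$ is a homomorphism with $\rho(\sigma_i)=T_{s_i}$, after reading the $j$-th crossing I will have multiplied the current element on the right by $T_{s_{i_j}}^{\epsilon_j}$; thus the whole computation is a sequence of $N$ right-multiplications by a single generator $T_{s_i}^{\pm1}$, each acting on the stored vector, and I maintain no other data of algebraic type.

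The algebraic core is that each such multiplication is \emph{local}. Recall that in the $\{T_w\}$ basis one has $T_wT_{s_i}=T_{ws_i}$ when $i$ is an ascent of $w$ (i.e. $w(i)<w(i+1)$, equivalently $\ell(ws_i)>\ell(w)$), and $T_wT_{s_i}=q\,T_{ws_i}+(q-1)\,T_w$ when $i$ is a descent; multiplication by $T_{s_i}^{-1}$ obeys the analogous two-term rule with $q^{-1}$ in place of $q$. Hence right multiplication couples coefficients only within each pair $\{w,ws_i\}$, through a fixed $2\times2$ linear map with constant entries: on the ascent representative one has $c_w\mapsto q\,c_{ws_i}$ and $c_{ws_i}\mapsto c_w+(q-1)\,c_{ws_i}$. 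I would therefore update the array in place by sweeping the $n!$ indices once, applying this $2\times2$ map to the pair $(c_w,c_{ws_i})$ for every $w$ at which $i$ is an ascent and skipping the descent partners, so that each pair is touched exactly once. Each pair costs $O(1)$ algebraic operations and needs only $O(1)$ scratch coefficients to hold the two old values; this is what yields the $O(n!\times N)$ algebraic-operation count and the $n!+O(1)$ algebraic-storage bound.

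The remaining cost is combinatorial indexing, and this is where I would spend the most care. For each visited permutation I must (i) decode it from its array index, (ii) test whether $i$ is an ascent, an $O(1)$ comparison, and (iii) when it is, locate $\idx(ws_i)$, where $ws_i$ differs from $w$ only by the adjacent transposition of the entries in positions $i,i+1$ of its one-line notation. Realizing $\idx$ through the factorial number system (Lehmer codes) lets me both rank a permutation and unrank an index in $O(n\log n)$ time using an order-statistic / Fenwick-tree structure of $O(n)$ integer words and no algebraic storage. Charging one rank and one unrank to each of the $n!$ indices gives $O(n!\,(n\log n))$ ordinary (word) operations per crossing, hence $O(n!\,(n\log n)\times N)$ in total, matching the claim.

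The hard part will be the bookkeeping that makes the in-place sweep simultaneously correct and tight on storage. I must check that $w\mapsto ws_i$ is a fixed-point-free involution on $\Perm_n$ (immediate, since $s_i\neq\id$) whose every orbit $\{w,ws_i\}$ contains exactly one element having $i$ as an ascent (because $w(i)\neq w(i+1)$ forces exactly one orientation), so that the single pass over ascent representatives partitions the basis into disjoint pairs and touches each exactly once; and that applying the $2\times2$ map reads both old coefficients before overwriting either, so the update is genuinely in place with only $O(1)$ algebraic scratch. Granting this, the three stated counts follow from the charging argument above, and overall correctness follows from the homomorphism property of $\rho$ combined with the two-term multiplication rule applied one crossing at a time.
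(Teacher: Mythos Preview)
Your proposal is correct and follows essentially the same approach as the paper: store the Hecke image as a length-$n!$ coefficient array indexed via Lehmer codes, process the braid one crossing at a time, and update each pair $\{w, s_i w\}$ independently using the two-term multiplication rule in a single pass, with $O(n\log n)$ rank/unrank per entry. The only cosmetic differences are that the paper multiplies by generators on the left (writing $b=\sigma_{i_c}^{\epsilon_c}\cdots\sigma_{i_1}^{\epsilon_1}$ and updating via $T_{s_{i_j}}^{\epsilon_j}\psi(b_{(j-1)})$) and uses the $q^2$-normalization $(T_{s_i}-1)(T_{s_i}+q^2)=0$ rather than the $(q-1)/q$ convention you quote, but neither affects the algorithm or its analysis.
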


Every knot or link can be obtained from the braid closure of a braid~\cite{Alexander1923}; see Figure~\ref{fig:intro_fig}. Additionally, the HOMFLY-PT polynomial of the knot/link obtained from the braid closure can be computed from the Hecke representation of the braid by taking a {\em trace}. This is an alternative, more algebraic approach, to the usual definition of quantum invariants used in parameterized algorithms in the literature, relying either on {\em state sums}~\cite{Burton2018}, {\em tensor networks}~\cite{maria:LIPIcs.SoCG.2021.53}, or skein relations~\cite{burton:LIPIcs.SoCG.2018.18}. Here, we exploit the rigidity provided by the algebraic or representation-theoretic point of view allows to more efficient algorithmic processes. In particular, the morphism spaces between representations are of much lower dimension than those between the underlying vector spaces, which helps reducing the algorithmic complexity.

We describe a fast implementation of the trace operation in the single pass of the coordinate vector representing the Hecke element associated to the braid.

\begin{theorem}[see Theorem~\ref{thm:complexity_homfltypt}]
Given a braid $b \in B_n$ with $m$ crossings, there is an algorithm to compute its HOMFLY-PT polynomial with the same complexity as Theorem~\ref{thm:main_hecke}.
\end{theorem}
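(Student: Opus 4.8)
The plan is to reduce the computation of the HOMFLY-PT polynomial of the link $\hat b$ to the computation of the Hecke representation $\rho(b)\in\Hecke_n$ provided by Theorem~\ref{thm:main_hecke}, followed by a single linear-functional evaluation. The starting point is the classical theorem of Jones and Ocneanu: the HOMFLY-PT polynomial of the closure $\hat b$ is obtained by applying the Markov (Ocneanu) trace $\tau$ on $\Hecke_n$ to $\rho(b)$, and multiplying by a normalization factor depending only on the number of strands $n$ and on the exponent sum (writhe) $e(b)$ of the braid word. Thus the algorithm is: first run the algorithm of Theorem~\ref{thm:main_hecke} to obtain the coordinate vector $(c_w)_{w\in\Sigma_n}$ of $\rho(b)$ in the standard basis $\{T_w\}$; then evaluate $\tau$ on it; then apply the normalization.

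The key point is that $\tau$ is a linear functional, so $\tau(\rho(b))=\sum_{w\in\Sigma_n} c_w\,\tau(T_w)$, and the trace values $\tau(T_w)$ on the basis elements are monomials in the Hecke parameters whose exponents are read off from simple combinatorial invariants of the permutation $w$ (obtained from the coset decomposition $\Sigma_n=\bigsqcup \Sigma_{n-1}\cdot\{1,\,s_{n-1},\,s_{n-1}s_{n-2},\,\dots,\,s_{n-1}\cdots s_1\}$ that defines $\tau$ inductively). I would therefore not precompute and store the $n!$ trace values---which would break the $n!+O(1)$ storage budget---but instead compute $\tau(T_w)$ on the fly from $w$ as the corresponding coordinate $c_w$ is read, accumulating $c_w\,\tau(T_w)$ into a single running sum. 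This realizes the trace in one pass over the coordinate vector, costs $O(n!)$ additional algebraic operations (each trace value being a monomial produced in $O(\poly(n))$ integer arithmetic on its exponents), and uses only $O(1)$ extra storage.

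It remains to handle the normalization. The exponent sum $e(b)$ is accumulated in $O(m)$ operations during the same scan of the braid word already performed by the algorithm of Theorem~\ref{thm:main_hecke}; the prefactor is a fixed Laurent expression raised to the powers $n-1$ and $e(b)$, computable by fast exponentiation in $O(\log n+\log|e(b)|)$ algebraic operations and $O(1)$ storage. Multiplying the running sum by this prefactor, after the change of variables relating the Hecke parameters to the HOMFLY-PT variables, yields the HOMFLY-PT polynomial. Since every added step is dominated by the cost of computing $\rho(b)$ itself, the total number of operations, of algebraic operations, and the storage all coincide with those of Theorem~\ref{thm:main_hecke}.

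The main obstacle is the content of the second paragraph: one must pin down the explicit combinatorial formula for $\tau(T_w)$ so that each trace value is genuinely computable from $w$ alone, without reference to the other basis elements, which is precisely what allows the trace to be folded into a single pass while preserving the $n!+O(1)$ storage bound. A secondary but delicate point is the bookkeeping of the normalization conventions---the exact prefactor together with the variable substitution---so that the algorithm outputs exactly the HOMFLY-PT polynomial rather than a rescaled relative.
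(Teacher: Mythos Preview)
Your reduction to evaluating the Ocneanu trace on the coordinate vector is the right starting point, but the gap is your assertion that $\tau(T_w)$ is a monomial readable from combinatorial invariants of $w$. This is false: already for $w_0=s_1s_2s_1\in\symG_3$, cyclicity gives $\tau(T_{w_0})=\tau(T_{s_1}^2T_{s_2})=(1-q^2)\tau(T_{s_1s_2})+q^2\tau(T_{s_2})=(1-q^2)z^2+q^2z$ (with $z$ the Markov parameter), a genuine two-term polynomial. The coset recursion you invoke reduces $\tau(T_w)$, for $w=w'\cdot(s_{n-1}\cdots s_k)$, to $z\cdot\tau(T_{s_{n-2}\cdots s_k}T_{w'})$; but $T_{s_{n-2}\cdots s_k}T_{w'}$ is a \emph{product} in $\Hecke_{n-1}$, not a single basis element, so the recursion branches and computing $\tau(T_w)$ on the fly is not $O(\poly(n))$ per $w$. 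Indeed, up to normalization $\tau(T_w)$ is the HOMFLY polynomial of the closure of the positive lift of $w$, which for long $w$ has many terms. You flag exactly this step as the ``main obstacle'', and it is: no such closed monomial formula exists.

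The paper avoids this by \emph{not} evaluating $\tau$ basis element by basis element. Instead it rewrites the whole vector $\sum_w a_wT_w$ in place, modulo the relation $\Tr(xy)=\Tr(yx)$, until it is supported only on the ``annularly reduced'' permutations (products of increasing cycles on consecutive blocks), for which the HOMFLY value of the closure is an explicit product (Lemma~\ref{lem:ValueAnnRed}). The key combinatorial fact (Lemma~\ref{lem:lowerOrder}) is that each rewriting step sends $T_w$ to a sum of at most two $T_{w'}$ with $w'$ strictly smaller in lexicographic order; hence a single right-to-left pass over the length-$n!$ array suffices, with $O(1)$ algebraic operations per entry and no extra storage beyond the array itself. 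This in-place reduction, rather than a per-element trace formula, is what makes the trace step fit the $n!+O(1)$ storage bound, and it is the mechanism missing from your plan.
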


We compare experimentally this algorithm against Burton's implementation~\cite{burton:LIPIcs.SoCG.2018.18,regina} on large families of braid closures, and demonstrate its practical interest.

Finally, we use these fast implementations in experimental mathematics, running an extensive search for counter-examples to the Hecke faithfulness question in the case of Hecke representation with $\Z/p\Z$ coefficients. 

Taking inspiration from earlier works~\cite{BQ,Bigelow_Burau,GWY}, we have implemented a random algorithm for finding counter-examples to faithfulness in the braid groups $B_4$ and $B_5$, for different coefficients $\Z/p\Z$. Exploiting the favorable algebraic properties of braids, we have used {\em Garside theory} to generate increasingly complicated, non-trivial, braids for the search, running on each of them the algorithm from Theorem~\ref{thm:complexity_hecke} in order to find braids whose Hecke representation was getting increasingly close to the trivial one. We have found explicit non-trivial braids in $B_4$ whose Hecke image is trivial in coefficients $\Z/2\Z$, $\Z/3\Z$ and $\Z/4\Z$. The non-faithfulness of the Hecke representation was known abstractly in these cases, but our algorithm provides an efficient way to find examples of elements in the kernel. Additionally, we have found a non-trivial braid in $B_5$ whose Hecke image modulo 2 is trivial, while the status of the faithfulness of the Hecke representation of $B_5$ was unknown (in any coefficients).  

\begin{theorem}[see Theorem~\ref{thm:kernelB5}]
The Hecke representation of the braid group $B_5$ modulo $2$ is not faithful.
\end{theorem}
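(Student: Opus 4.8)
The plan is to prove non-faithfulness constructively: it suffices to exhibit a single braid $\beta \in B_5$ that is non-trivial as a group element, yet whose image under the Hecke representation with coefficients reduced modulo $2$ coincides with the image of the identity braid. Producing such a $\beta$ reduces the theorem to two independent, and individually certifiable, verifications: (i) $\beta \neq \id$ in $B_5$, and (ii) the coordinate vector of $\beta$ in the chosen basis of $\Hecke_5$, with coefficients taken modulo $2$, equals that of the identity.

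For step (i), I would rely on the solvability of the word problem in $B_n$ via Garside theory. Every braid admits a unique left-greedy (Garside) normal form, computable by exact combinatorial manipulations of permutation data, and $\beta$ is trivial if and only if its normal form is empty. Since $\beta$ is produced by the Garside-driven search, its normal form is available by construction, and a non-empty normal form certifies $\beta \neq \id$ with no numerical error. Alternatively, evaluating a representation known to be faithful over the integers---such as the Lawrence--Krammer representation---corroborates non-triviality, but the Garside certificate is the cleanest and is exact.

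For step (ii), I would run the algorithm of Theorem~\ref{thm:main_hecke} on the word for $\beta$, with all coefficient arithmetic carried out modulo $2$, maintaining the coordinate vector of the partial product in a fixed basis of $\Hecke_5$ as the diagram is scanned from bottom to top. Since the entries of this vector are (Laurent) polynomials with coefficients in $\Z/2\Z$ and every ring operation together with the combinatorial bookkeeping over permutations is performed exactly, the output is obtained without any approximation. Comparing it with the coordinate vector of $\id$ then certifies that $\beta$ lies in the kernel of the mod-$2$ Hecke representation.

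The genuinely hard part is not either verification but the \emph{search} that locates such a $\beta$: non-trivial kernel elements of the mod-$2$ representation are rare, and $\beta$ must pass both tests simultaneously. This is exactly where the reservoir-sampling strategy fed by Garside-generated candidates, combined with the speed of the algorithm of Theorem~\ref{thm:main_hecke}, becomes essential---it allows scanning enormously many non-trivial braids while scoring how close each one's Hecke image is to trivial, steering the exploration toward the kernel. Once a candidate passing steps (i) and (ii) is found, those two exact finite computations promote it into a rigorous proof of non-faithfulness.
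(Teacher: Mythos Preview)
Your proposal is correct and follows exactly the paper's approach: exhibit an explicit non-trivial braid in $B_5$ (certified via its Garside normal form) whose image in $\Hecke_5$ over $\Z/2\Z[q^{\pm 1}]$ is trivial (certified by running the algorithm of Theorem~\ref{thm:complexity_hecke}). The only thing your write-up lacks is the braid itself, which is the actual content of the proof; the paper supplies it explicitly in Theorem~\ref{thm:kernelB5} as $s_9^{-1}\Delta^{-9}b$ for a specific positive $b$ of Garside length $19$, after which both verifications are finite exact computations as you describe.
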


\subparagraph*{Comparison with the literature.} The HOMFLY-PT polynomial is a powerful topological invariants, and its computation has attracted the attention of mathematicians. For a knot diagram with $N$ crossings, Kauffman gave a fast $O(2^N)$ skein template algorithm~\cite{Kauffman1990}. More recently, starting from Kauffman's work, Burton~\cite{burton:LIPIcs.SoCG.2018.18} designed the first fixed parameter tractable algorithm in the treewidth of the knot diagram, running in $O((2 \tw)!^4 \cdot \poly(N))$,
and implemented it in the software {\tt Regina}~\cite{regina}. This complexity bound should be compared to our bound from Theorem~\ref{thm:main_hecke}.

While treewidth, over all possible diagrams of a knot, is a generally smaller parameter than the braid index, it is common for them to be essentially equal, and the diagrams minimizing the treewidth are braid closures ; this is the case for {\em torus knots} for example~\cite{DBLP:journals/jocg/SchleimerMPS19,lunel_et_al:LIPIcs.SoCG.2023.50}. On the other hand, the complexity dependence of our algorithm in the number of strands is much lower than Burton's dependence in the treewidth, which proves an advantage in practice on particular families of examples where treewidth and braid index are close.

Our approach to find counter-examples to the faithfulness of the Hecke representation is inspired from recent works~\cite{GWY} in computational mathematics, which have used a combination of curve-based random search with Garside theory to study the faithfulness of the {\em Burau representation} of the braid group, and in particular to find braids with trivial representations. 
However, the Burau representation of a braid is significantly less costly to compute, with a (small) polynomial dependence in both number of strands and number of crossings, while all known algorithms for the HOMFLY-PT polynomial, including this work, have a super-exponential dependence in the number of strands.

\section{Preliminaries}
\subsection{Permutations and compact encoding} \label{sec:perm}

The {\em symmetric group} $\symG_n$ is the group of permutations of $n$ elements $\{1, \ldots, n\}$. The group $\symG_n$ is generated by the {\em transpositions}, \ie, the permutations $s_i$ of the two consecutive elements at index $i$ and $i+1$, for $1 \leq i \leq n-1$. Every permutation $w \in \symG_n$ can be represented either by a unique {\em one-line word} $w(1)w(2) \cdots w(n)$, where $w(i)$ is the image of $i$ under the permutation $w$, or by a non-unique product of generators $w = s_{i_k} \cdots s_{i_1}$. 

An {\em inversion} in a permutation $w$ is a pair of indices $(i,j)$ such that $i<j$ and $w(i)>w(j)$. The {\em (Coxeter) length} of a permutation $w$, denoted by $\ell(w)$, is its total number of inversions.

The {\em Lehmer code} $L(w)$ of a permutation $w = w(1)w(2) \cdots w(n)$ is the sequence of positive integers $L(w) := (L(w)_1, \ldots, L(w)_n)$ such that:
\[
  L(w)_i := \#\{ j > i : w(j) < w(i) \},  
\]
\ie, $L(w)_i$ counts the number of inversions happening to the right of index $i$. In particular, $0 \leq L(w)_i \leq n-i$. Permutations are in bijection with their Lehmer code. 

This last properties allows us to represent a permutation with a single number, using the {\em factorial number system}. Specifically, given a sequence of numbers ${\bf a} = (a_1, \ldots, a_n)$ such that $0 \leq a_i \leq n-i$ for all $i$, one can write uniquely the sequence as a single positive integer:
\begin{equation}\label{eq:fact_system}
  (a_1, \ldots, a_n) \rightarrow F_{{\bf a}} = \sum_{i=1}^n a_i \times (n-i)!
\end{equation}

In particular, the application $\symG \to \{0, \ldots, n!-1\}$ sending $ w \mapsto F_{L(w)}$ is a bijection. In the following, we call the number $F_{L(w)}$ the {\em index of the permutation} $w$.

We prove in the appendix the following basic facts about the encoding:

\begin{lemma}
\label{lem:basic_perm}
\begin{enumerate}[(i)]
  \item Given an index $F_{L(w)} \in \{0, \ldots, n!-1\}$, there is an $O(n \log n)$ time algorithm to compute a one-line word presentation of $w$,
  \item Given a permutation $w$ represented as a one-line word $w(1) \cdots w(n)$, there is an $O(n \log n)$ time algorithm to compute its index $F_{L(w)}$,
  \item Let $w,w' \in \Perm_n$ be permutation, then:
  \[
    w(1) \cdots w(n) \leq_{\operatorname{lex}} w'(1) \cdots w'(n) \ \text{iff} \ F_{L(w)} \leq F_{L(w')},
  \]
  where $\leq_{\operatorname{lex}}$ denotes the lexicographic order.
\end{enumerate}
\end{lemma}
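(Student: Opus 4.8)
The plan is to treat the three statements through the common observation that the one-line word, the Lehmer code $L(w)$, and the index $F_{L(w)}$ are three encodings of the same permutation, and to handle the conversions in the two adjacent pairs. Throughout I work in the word-RAM model in which an index in $\{0,\dots,n!-1\}$ fits in a machine word and arithmetic on such words costs $O(1)$; this is the same convention that makes the table lookups of the main algorithm $O(1)$. Under it, passing between the Lehmer code and the index is cheap: evaluating \eqref{eq:fact_system} by a Horner scheme in mixed radix, $x \leftarrow x\cdot(n-i+1)+L(w)_i$ for $i=1,\dots,n$, costs $O(n)$ word operations, and the inverse extraction of the digits by successive division (reducing $F$ modulo $2,3,4,\dots$ and dividing) costs $O(n)$ as well. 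Hence the whole difficulty of (i) and (ii) reduces to converting between the one-line word and the Lehmer code.

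For (ii) I would compute $L(w)_i=\#\{j>i:w(j)<w(i)\}$ by scanning the word from right to left while maintaining an order-statistics structure (a Fenwick tree over the value range $\{1,\dots,n\}$, or a balanced search tree): upon reaching position $i$, query the number of already-inserted values smaller than $w(i)$, then insert $w(i)$. Each query and insertion is $O(\log n)$, for $O(n\log n)$ overall, after which the Horner evaluation yields $F_{L(w)}$. For (i) I would run the same structure in reverse: initialise it with all of $\{1,\dots,n\}$, and for $i=1,\dots,n$ recover $w(i)$ as the $(L(w)_i+1)$-st smallest value still present, deleting it afterwards; a select-by-rank-and-delete operation is again $O(\log n)$, giving $O(n\log n)$. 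The obstacle in these two parts is purely algorithmic, namely to beat the naive $O(n^2)$ inversion count by means of the order-statistics data structure.

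For (iii) the plan is to factor the comparison through the Lehmer code and chain two order-preserving correspondences. First I would show that $w<_{\mathrm{lex}}w'$ iff $L(w)<_{\mathrm{lex}}L(w')$: if $i$ is the first index where the one-line words differ, then the common prefix forces the two sets of remaining values to coincide, which gives $L(w)_j=L(w')_j$ for all $j<i$ (each such digit is the rank of the common value $w(j)=w'(j)$ among the common remaining set); at position $i$, both $L(w)_i$ and $L(w')_i$ are the ranks of $w(i)$, resp.\ $w'(i)$, in the same remaining set, so $w(i)<w'(i)$ iff $L(w)_i<L(w')_i$. Hence the Lehmer codes first differ exactly at $i$ and in the same direction. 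Second I would show that on valid codes lexicographic order agrees with the numeric order of $F$, which rests on the telescoping identity
\[
  \sum_{m=0}^{M} m\cdot m! = (M+1)!-1,
\]
showing that a single unit in position $i$, of place value $(n-i)!$, strictly exceeds the largest possible combined contribution $\sum_{j>i}(n-j)\,(n-j)!=(n-i)!-1$ of all lower positions. Thus the most significant digit in which two codes differ decides the comparison of their indices, and composing the two correspondences yields the claimed equivalence $w\le_{\mathrm{lex}}w'$ iff $F_{L(w)}\le F_{L(w')}$.
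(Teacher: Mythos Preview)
Your proposal is correct and follows essentially the same approach as the paper: for (i)--(ii) you reduce to $O(n)$ conversions between the index and the Lehmer code (Horner/successive division) and then use an order-statistics structure (the paper uses a balanced binary tree with subtree-size annotations, which is exactly your select-by-rank-and-delete) for the $O(n\log n)$ conversion between the code and the one-line word; for (iii) you argue, as the paper does, that the first differing position in the one-line word is also the first differing position in the Lehmer code and in the same direction, and that the factorial weights make that digit dominate---your explicit telescoping identity $\sum_{m=0}^{M} m\cdot m!=(M+1)!-1$ is a cleaner justification of the paper's ``enough weight'' remark.
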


\subsection{Braids and Garside structure}

We consider $\Br_n$ the braid group on $n$ strands:
\[
\Br_n=\left\langle \sigma_1,\dots,\sigma_{n-1}\;|\; \begin{cases} \sigma_i\sigma_{i+1}\sigma_i=\sigma_{i+1}\sigma_i\sigma_{i+1}\; \\ \sigma_i\sigma_j=\sigma_j\sigma_i\;\text{if}\;|i-j|>1\end{cases}\right\rangle.
\]
The generators $\{\sigma_i^{\pm 1}\}_i$ are called the {\em Artin generators} of the braid group. The monoid with the same presentation, \ie, generated by the $\{\sigma_i^{+ 1}\}_i$, is denoted by $\Br_n^+$.

For some of our applications, we will be interested in producing braids that we want to make sure get more and more complicated. The notion of Garside structure~\cite{Dehornoy_Garside_book} will be instrumental, by providing us with an algorithmic-friendly normal form.

Let us denote $[1,\Delta]$ the braids that realize positive lifts of reduced words from the symmetric group. For example, $s_1s_2s_1=s_2s_1s_2\in \symG_n$ lifts to $\sigma_1\sigma_2\sigma_1=\sigma_2\sigma_1\sigma_2\in \Br_n$. It is a classical result that the map is well-defined (independent on the choice for the starting word). The half-twist $\Delta=\sigma_1\cdots\sigma_{n-1}\sigma_1\cdots \sigma_{n-2}\cdots\sigma_1$ is the so-called {\em Garside element}. Then one has a lattice structure~\cite{Garside,Dehornoy_Garside_book} on $[1,\Delta]$, extending to $\Br_n^+$, with left divisibility given by:
\[
b_1\leq_L b_2 \; \Leftrightarrow \; \exists b_3\in \Br_n^+,\; b_1b_3=b_2.
\]
Similarly, one can define a notion of right divisibility, $\leq_R$. We are now ready to obtain the normal form $b_k\cdots b_1$ of a braid $b\in\Br_n^+$ by inductively defining:
\[
b_i\colon=\sup\{b'\in [1,\Delta],\; b'\leq_R (bb_1^{-1}\cdots b_{i-1}^{-1})\}.
\]
If one adds the fact that any braid $b\in \Br_n$ can be written as $\Delta^rb^{+}$ with $b^+\in \Br_n^+$, then one has a preferred expression for any braid.

At some point, we will be interested in using the Garside normal form to generate complicated braids. This will be made algorithmic thanks to the following lemma.

\begin{lemma} \label{lem:NF_cond_back}
  The word $b_k\cdots b_1$ with $b_i\in [1,\Delta]$ is under normal form if and only if the following condition holds:
  \[\forall i\geq 2, \forall j,\quad \sigma_j\leq_R b_i\Rightarrow \sigma_j\leq_L b_{i-1}.\]
\end{lemma}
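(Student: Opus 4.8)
The plan is to reduce the global normal-form condition to a purely local statement about consecutive pairs, and then to translate that local statement into the concrete divisibility condition of the lemma. Recall from the inductive definition that each $b_{i-1}$ is the maximal simple right-divisor of the suffix $b_i\cdots b_1$. A standard fact of Garside theory~\cite{Dehornoy_Garside_book} is that this greedy decomposition is locally checkable: the word $b_k\cdots b_1$ is in normal form if and only if, for every $i\geq 2$, the factor $b_{i-1}$ is already the maximal simple right-divisor of the single product $b_ib_{i-1}$. I would take this locality as the starting point and prove that, for a fixed pair $(b_i,b_{i-1})$, maximality of $b_{i-1}$ in $b_ib_{i-1}$ is equivalent to the implication $\sigma_j\leq_R b_i\Rightarrow \sigma_j\leq_L b_{i-1}$ for all $j$.

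The technical engine is the following dichotomy for an atom $\sigma_j$ and a simple element $s\in[1,\Delta]$ lifting a permutation $w\in\symG_n$: the product $\sigma_j s$ lies again in $[1,\Delta]$ if and only if $\sigma_j\not\leq_L s$. Indeed, $\sigma_j\leq_L s$ means exactly that $s_j$ is a left descent of $w$, so that $s=\sigma_j s'$ and $\sigma_j s=\sigma_j^2 s'$ fails to be a positive lift of a reduced word and hence is not simple; conversely, if $s_j$ is not a left descent then $\ell(s_jw)=\ell(w)+1$, and $\sigma_j s$ is the positive lift of the reduced word $s_jw$, which is a permutation and therefore has simple lift in $[1,\Delta]$ (any permutation is below the longest element, whose lift is $\Delta$).

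Given this, both directions of the local equivalence are short. If the implication fails for some $j$, i.e. $\sigma_j\leq_R b_i$ but $\sigma_j\not\leq_L b_{i-1}$, then $\sigma_j b_{i-1}$ is simple by the dichotomy, strictly larger than $b_{i-1}$, and right-divides $b_ib_{i-1}$ because $\sigma_j\leq_R b_i$; hence $b_{i-1}$ is not maximal. Conversely, suppose $b_{i-1}$ is not the maximal simple right-divisor, and let $d$ be that maximum, so $b_{i-1}<_R d\leq_R b_ib_{i-1}$. Writing $d=e\,b_{i-1}$ and $b_ib_{i-1}=f\,d$ with $e,f\in\Br_n^+$ and $e$ nontrivial, right-cancellativity of $\Br_n^+$ yields $b_i=fe$, hence $e\leq_R b_i$. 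Picking any atom $\sigma_j\leq_R e$ gives $\sigma_j\leq_R b_i$, while $\sigma_j b_{i-1}\leq_R d$ is a right-divisor of a simple element and therefore simple, which by the dichotomy forces $\sigma_j\not\leq_L b_{i-1}$; so the implication fails. Together with the locality reduction, this proves the lemma.

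The step I expect to require the most care is the locality reduction itself: it is where the lattice/Garside structure on $[1,\Delta]$ and $\Br_n^+$ genuinely enters, and where one must stay vigilant about the left-versus-right divisibility conventions, since the decomposition peels factors off from the right while the output condition mixes $\leq_R$ on $b_i$ with $\leq_L$ on $b_{i-1}$. Everything downstream is then a clean consequence of the descent dichotomy of the second paragraph.
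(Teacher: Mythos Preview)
Your argument is correct. The locality reduction you cite from~\cite{Dehornoy_Garside_book} is indeed the standard entry point, and both directions of the local equivalence are handled cleanly; the only slightly implicit step is that $b_{i-1}\leq_R d$ (so that one may write $d=e\,b_{i-1}$), which follows because the maximal simple right-divisor is the right-gcd with $\Delta$ and hence is divisible by every simple right-divisor.

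As for comparison: the paper does not actually prove this lemma. It is stated as a known consequence of Garside theory, with a pointer to~\cite{Dehornoy_Garside_book}, and the only related argument in the paper is the reformulation in Lemma~\ref{lem:NF_cond}, which translates the descent condition into the one-line notation of permutations. So your write-up supplies a self-contained proof where the paper defers to the literature.
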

The divisibility in $[1,\Delta]$ being controlled by $\symG_n$, the computations can be made in the symmetric group. More details about the underlying automaton will appear in Section~\ref{sec:app_gar}.

\subsection{Hecke algebra}

The Hecke algebra is an ubiquitous object in representation theory. We introduce it as a quotient of the braid group, which makes it most natural for the computations we care about, and we refer to Appendix~\ref{app:homflypthecke} and to Mathas' book~\cite{Mathas_book} for further details and context.

\begin{definition}
  Let $\He_n$ be the $\Z[q^{\pm 1}]$ module obtained from the group algebra of the braid group as follows:
  \[
      \He_n=\frac{\Z[q^{\pm 1}][\Br_n]}{(\langle \sigma_i-1)(\sigma_i+q^2)\rangle}
  \]
\end{definition}

We will denote by $\psi:\Br_n\rightarrow \He_n$ the natural induced map. We quickly review some useful basic properties of the Hecke algebra that will be instrumental to us.

\begin{lemma} \label{lem:HeckeBasis}
  Denote by $T_{s_i}$ the image of the i-th generator $\sigma_i\in \Br_n$. Then there is a {\em well-defined} map (of sets):
  \begin{align*}
    \Perm_n&\rightarrow \He_n \ \ \ \ \ & \ \ \ \ \
    s=s_{i_1}\cdots s_{i_k} & \mapsto T_s:=T_{s_{i_1}}\cdots T_{s_{i_k}}.
  \end{align*}
  Furthermore the image $\{T_s\}_{s\in \symG_n}$ is a basis of the Hecke algebra as a $\Z[q^{\pm 1}]$-module.
\end{lemma}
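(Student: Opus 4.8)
The plan is to establish the two assertions—well-definedness and the basis property—separately, relying on the quadratic relation and the combinatorics of reduced words in $\symG_n$.

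For well-definedness, the point is that $T_s$ must not depend on the chosen reduced expression for $s$. By Matsumoto's theorem (equivalently, Tits' solution to the word problem for Coxeter groups), any two reduced expressions of a given $s \in \symG_n$ are related by a finite sequence of braid moves $s_i s_{i+1} s_i \leftrightarrow s_{i+1} s_i s_{i+1}$ and commutations $s_i s_j \leftrightarrow s_j s_i$ for $|i-j| > 1$, never invoking the relation $s_i^2 = e$. Since the $T_{s_i}$ are the images of the Artin generators $\sigma_i$, which satisfy exactly these braid and commutation relations in $\Br_n$ and hence in $\He_n$, the product $T_{s_{i_1}} \cdots T_{s_{i_k}}$ over a reduced word is invariant under each such move, yielding a well-defined element $T_s$.

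For spanning, I would first record the consequence of the quadratic relation $(\sigma_i - 1)(\sigma_i + q^2)$, namely $T_{s_i}^2 = (1 - q^2) T_{s_i} + q^2$ in $\He_n$. As $\He_n$ is generated as a $\Z[q^{\pm 1}]$-algebra by the $T_{s_i}$, every element is a combination of monomials $T_{s_{i_1}} \cdots T_{s_{i_k}}$, and I would show by induction on $k$ that each such monomial lies in the span of $\{T_s\}_{s \in \symG_n}$. The induction uses the multiplication rule $T_{s_i} T_s = T_{s_i s}$ when $\ell(s_i s) = \ell(s) + 1$, and $T_{s_i} T_s = (1 - q^2) T_s + q^2 T_{s_i s}$ when $\ell(s_i s) = \ell(s) - 1$; the latter follows by writing $s = s_i s'$ with $\ell(s') = \ell(s) - 1$, so that $T_s = T_{s_i} T_{s'}$, and applying the quadratic identity to $T_{s_i}^2$.

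Linear independence is the main obstacle, since the abstract relations do not by themselves forbid collisions among the $T_s$. The standard remedy is to exhibit a module on which the $T_s$ act as distinct basis vectors. Let $V$ be the free $\Z[q^{\pm 1}]$-module with basis $\{v_w\}_{w \in \symG_n}$, and define operators $\rho(s_i)$ on $V$ by $\rho(s_i) v_w = v_{s_i w}$ if $\ell(s_i w) > \ell(w)$ and $\rho(s_i) v_w = (1 - q^2) v_w + q^2 v_{s_i w}$ otherwise. The crux is verifying that these operators satisfy the quadratic and braid/commutation relations: the quadratic check is immediate from the two defining cases, the commutation relation for $|i - j| > 1$ is routine, but the braid relation requires a case analysis according to the relative lengths of $w, s_i w, s_{i+1} w, \dots$ inside the rank-two parabolic subgroup generated by $s_i, s_{i+1}$. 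Once the relations are checked, $\rho$ extends to a $\Z[q^{\pm 1}]$-algebra homomorphism $\He_n \to \End_{\Z[q^{\pm 1}]}(V)$, and a short induction on $\ell(w)$—writing $w = s_i w'$ with $\ell(w') = \ell(w) - 1$—shows $\rho(T_w)\, v_e = v_w$. Since the $v_w$ are linearly independent in $V$, so are the $T_w$ in $\He_n$, and combined with spanning this proves that $\{T_s\}_{s \in \symG_n}$ is a $\Z[q^{\pm 1}]$-basis.
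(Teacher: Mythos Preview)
Your proposal is correct and follows the standard textbook argument. The paper itself does not give a full proof of this lemma: for well-definedness it simply remarks that ``the $T_s$ satisfy the same braiding relations as the elements in the symmetric group''---which is your Matsumoto argument stated informally---and for the basis property it defers entirely to a reference (Mathas' book, Theorem~1.11). What you have written is essentially the proof one finds in that reference, so your approach agrees with the paper's intended one, only spelled out in detail rather than cited.
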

The content of the above lemma is that the image of $s$ does not depend on the word chosen to write it. This is simply because the $T_s$ satisfy the same braiding relations as the elements in the symmetric group. To our eyes, the key property is that we have a basis indexed by permutations, which can be efficiently handled. For computation purposes, we will also take advantage of the fact that the structure constants of this basis are quite simple, as expressed in the following classical lemmas~\cite[Theorem 1.11]{Mathas_book}.

\begin{lemma} \label{lem:prod_hecke}
  \[
  \forall w\in \symG_n, \; \forall i,\;  T_{s_i}T_w=
  \left\{ \begin{array}{ccc} T_{s_iw} &  \text{if}  & \ell(s_iw)>\ell(w); \\
  (1-q^2)T_w +q^2T_{s_iw} & \text{if} & \ell(s_iw)<\ell(w).\\
  \end{array}\right.
  \]  
\end{lemma}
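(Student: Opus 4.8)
The plan is to reduce the whole statement to the single quadratic relation satisfied by each generator, combined with the standard length‑change behaviour of simple reflections in $\Perm_n$ and the well‑definedness of $T_s$ established in Lemma~\ref{lem:HeckeBasis}. The only input specific to the Hecke algebra will be the defining relation; everything else is a bookkeeping argument in the symmetric group.

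First I would extract the quadratic relation from the definition of $\He_n$. Expanding $(\sigma_i-1)(\sigma_i+q^2)$ and setting it to zero in the quotient gives $T_{s_i}^2 + (q^2-1)T_{s_i} - q^2 = 0$, i.e. $T_{s_i}^2 = (1-q^2)T_{s_i} + q^2$. This is the sole nonlinear identity used below. I would also recall the basic Coxeter‑theoretic dichotomy that for a transposition $s_i$ and any $w \in \Perm_n$ one has either $\ell(s_iw) = \ell(w)+1$ or $\ell(s_iw) = \ell(w)-1$, together with the reduced‑word consequence: when $\ell(s_iw) > \ell(w)$, prepending $s_i$ to a reduced word $w = s_{i_1}\cdots s_{i_k}$ yields a reduced word for $s_iw$.

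Next I would treat the two cases. In the increasing case $\ell(s_iw) > \ell(w)$, the reduced‑word remark gives that $s_i s_{i_1}\cdots s_{i_k}$ is reduced, so by the independence of $T_s$ from the chosen reduced word (Lemma~\ref{lem:HeckeBasis}) we get $T_{s_iw} = T_{s_i}T_{s_{i_1}}\cdots T_{s_{i_k}} = T_{s_i}T_w$, which is exactly the first line. In the decreasing case $\ell(s_iw) < \ell(w)$, set $w' := s_iw$, so that $\ell(w') < \ell(w)$ and $w = s_iw'$ is now in the increasing situation; applying the first line to $w'$ gives $T_w = T_{s_i}T_{w'}$. Then $T_{s_i}T_w = T_{s_i}^2 T_{w'}$, and substituting the quadratic relation yields $(1-q^2)T_{s_i}T_{w'} + q^2 T_{w'} = (1-q^2)T_w + q^2 T_{s_iw}$, using $T_{s_i}T_{w'} = T_w$ and $T_{w'} = T_{s_iw}$ once more. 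This is the second line.

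The argument is essentially formal, so I do not expect a genuine obstacle; the only non‑mechanical ingredient is the length dichotomy and its reduced‑word corollary, which are classical facts about Coxeter groups. The main point requiring care is to invoke the well‑definedness of $T_s$ correctly, so that the identifications $T_{s_iw} = T_{s_i}T_w$ and $T_w = T_{s_i}T_{s_iw}$ are legitimate consequences of reducedness rather than circular assumptions about the product being computed.
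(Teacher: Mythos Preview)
Your argument is correct and is exactly the standard proof of this classical identity. Note, however, that the paper does not actually prove this lemma: it is stated without proof and attributed to~\cite[Theorem 1.11]{Mathas_book} as a classical fact. Your write-up supplies the usual derivation from the quadratic relation and the well-definedness of $T_w$ (Lemma~\ref{lem:HeckeBasis}), which is precisely the argument one finds in that reference.
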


\begin{lemma} \label{lem:inv}
  \[
  \forall w\in \symG_n, \; \forall i,\;  T_{s_i}^{-1}T_w=
  \left\{\begin{array}{ccc}
 (1-q^{-2})T_w +q^{-2}T_{s_iw} & \text{if} & \ell(s_iw)>\ell(w); \\
  T_{s_iw} & \text{if} & \ell(s_iw)<\ell(w).\\
  \end{array}\right.
  \]  
\end{lemma}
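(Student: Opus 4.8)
The statement to prove is \reflem{inv}, the ``inverse'' structure-constant rule for the Hecke algebra: for all $w \in \symG_n$ and all $i$, we have $T_{s_i}^{-1}T_w = (1-q^{-2})T_w + q^{-2}T_{s_iw}$ when $\ell(s_iw) > \ell(w)$, and $T_{s_i}^{-1}T_w = T_{s_iw}$ when $\ell(s_iw) < \ell(w)$.

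The cleanest plan is to \emph{derive} this lemma directly from \reflem{prod_hecke}, rather than redoing the combinatorics of lengths from scratch. The key preliminary observation is that $T_{s_i}$ is invertible in $\He_n$ with an explicit inverse. Indeed, the defining relation $(\sigma_i - 1)(\sigma_i + q^2) = 0$ pushes forward under $\psi$ to the quadratic relation $(T_{s_i}-1)(T_{s_i}+q^2) = 0$, i.e.\ $T_{s_i}^2 = (1-q^2)T_{s_i} + q^2$. Dividing by $q^2$ (which is a unit in $\Z[q^{\pm 1}]$) and rearranging, I would check that
\[
  T_{s_i}^{-1} = q^{-2}T_{s_i} + (q^{-2}-1).
\]
Concretely, one verifies $T_{s_i}\bigl(q^{-2}T_{s_i} + (q^{-2}-1)\bigr) = q^{-2}T_{s_i}^2 + (q^{-2}-1)T_{s_i} = q^{-2}\bigl((1-q^2)T_{s_i}+q^2\bigr) + (q^{-2}-1)T_{s_i} = 1$, using the quadratic relation, so the claimed expression is genuinely the two-sided inverse.

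With the formula for $T_{s_i}^{-1}$ in hand, the proof reduces to substituting it into $T_{s_i}^{-1}T_w$ and applying \reflem{prod_hecke} to the term $T_{s_i}T_w$. I would split into the two cases on the sign of $\ell(s_iw)-\ell(w)$. When $\ell(s_iw) > \ell(w)$, \reflem{prod_hecke} gives $T_{s_i}T_w = T_{s_iw}$, so
\[
  T_{s_i}^{-1}T_w = \bigl(q^{-2}T_{s_i} + (q^{-2}-1)\bigr)T_w = q^{-2}T_{s_iw} + (q^{-2}-1)T_w,
\]
which is exactly the first line (after writing $q^{-2}-1 = -(1-q^{-2})$, so the coefficient of $T_w$ is $1-q^{-2}$ up to my rearranging; I would double-check the sign matches the stated $(1-q^{-2})T_w$). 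When $\ell(s_iw) < \ell(w)$, \reflem{prod_hecke} gives $T_{s_i}T_w = (1-q^2)T_w + q^2 T_{s_iw}$, and substituting yields
\[
  T_{s_i}^{-1}T_w = q^{-2}\bigl((1-q^2)T_w + q^2 T_{s_iw}\bigr) + (q^{-2}-1)T_w = \bigl(q^{-2}-1 + q^{-2}-1\bigr)T_w + T_{s_iw},
\]
so the coefficient of $T_w$ collapses and I must confirm it vanishes to recover the clean $T_{s_i}^{-1}T_w = T_{s_iw}$; the arithmetic $q^{-2}(1-q^2) + (q^{-2}-1) = q^{-2}-1+q^{-2}-1$ needs to simplify to $0$, which it does precisely because $q^{-2}(1-q^2) = q^{-2}-1$ so the total is $2(q^{-2}-1)$---this does \emph{not} vanish, signalling that I have the sign of the inverse wrong and should instead use $T_{s_i}^{-1} = q^{-2}T_{s_i} + (1-q^{-2})$ after re-deriving from $T_{s_i}^{-1} = q^{-2}(T_{s_i} - (1-q^2)) $.

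The main obstacle, and the only place requiring genuine care, is this sign/coefficient bookkeeping: getting the inverse formula exactly right so that the two cases reproduce the stated lemma. I would therefore pin down $T_{s_i}^{-1}$ unambiguously from the quadratic relation first, verify it by direct multiplication as above, and only then perform the two substitutions; once the inverse is correct, both cases follow by a one-line computation, with the $\ell(s_iw)<\ell(w)$ case hinging on the cancellation of the $T_w$ coefficient. Everything else is purely formal and uses only that $\Z[q^{\pm1}]$ is a commutative ring with $q$ a unit, together with \reflem{HeckeBasis} to guarantee the $T_s$ are a well-defined basis so that the resulting identities are meaningful.
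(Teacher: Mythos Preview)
The paper does not actually prove \reflem{inv}; it is stated, together with \reflem{prod_hecke}, as a classical fact with a reference to \cite[Theorem~1.11]{Mathas_book}. Your strategy---extract an explicit formula for $T_{s_i}^{-1}$ from the quadratic relation and then invoke \reflem{prod_hecke}---is exactly the standard argument and is correct in outline.

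The execution, however, contains a sign error that you only partially diagnose. From $T_{s_i}^2 = (1-q^2)T_{s_i} + q^2$ one obtains
\[
T_{s_i}^{-1} \;=\; q^{-2}T_{s_i} + (1-q^{-2}),
\]
not $q^{-2}T_{s_i} + (q^{-2}-1)$. Your ``verification'' of the wrong formula is itself mistaken: expanding $q^{-2}\bigl((1-q^2)T_{s_i}+q^2\bigr) + (q^{-2}-1)T_{s_i}$ gives $2(q^{-2}-1)T_{s_i} + 1$, not $1$. You do eventually land on the correct inverse in the final paragraph, but only after the contradiction in Case~2, and the first case is left with the wrong coefficient $(q^{-2}-1)T_w$ rather than $(1-q^{-2})T_w$.

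With the correct inverse in hand both cases are one-line: for $\ell(s_iw)>\ell(w)$ one gets $q^{-2}T_{s_iw} + (1-q^{-2})T_w$ directly, and for $\ell(s_iw)<\ell(w)$ one gets
\[
q^{-2}\bigl((1-q^2)T_w + q^2 T_{s_iw}\bigr) + (1-q^{-2})T_w \;=\; (q^{-2}-1)T_w + T_{s_iw} + (1-q^{-2})T_w \;=\; T_{s_iw},
\]
with the $T_w$ coefficients cancelling cleanly. So there is no deeper obstacle; fix the sign once at the start and the remainder is immediate.
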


When we will want to compute knot invariants, it will be convenient to take braid closures of braids. Assuming that $P$ is a knot invariant, it is clear that $P(\hat{b_1b_2})=P(\hat{b_2b_1})$. This remark motivates the introduction an intermediary object $\Tr(\He_n)$ which is obtained from $\He_n$ by formally adding the relation $b_1b_2=b_2b_1$. This allows to break the computation into a first step $\He_n\rightarrow \Tr(\He_n)$, where elements are reduced as much as possible using the trace relation $b_1b_2=b_2b_1$, before using information about the specific invariant one want to compute to go from $\Tr(\He_n)$ to $\Z[q^{\pm 1}]$.
Note that because $\psi \colon B_n \to \He_n$ is a representation, for any two braids $b_1,_2' \in B_n$, we have $\Tr(\psi(b_1b_2)=\Tr(\psi(b_1)\psi(b_2)) = \Tr(\psi(b_2)\psi(b_1)) = \Tr(\psi(b_2b_1))$.

Finally, when searching for counterexamples, we will want to measure how far an element of $\He_n$ is from the identity. We will use the following notion, that we adapt from~\cite{GWY}.

\begin{definition}
  Let $x=\sum_w a_wT_w\in \He_n$. We define its \emph{projective length}:
  \[
  \projlength(x)=\max\{\deg(a_w),\; w\in \symG_n\}-\min\{\val(a_w),\; w\in \symG_n\},
  \]
  where $\deg$ and $\val$ are the degree and valuation of a Laurent polynomial in the variable $q$.
\end{definition}

Most of the notions we have defined here enjoy similar definitions with $\Z/p\Z$ in place of $\Z$. We will use this context in Section~\ref{sec:non-faithfulness}.

\section{Computing the Hecke representation of the braid group}
\label{sec:comp_hecke}
The goal of the algorithm is to compute the image $\psi(b)$ in the Hecke algebra $\He_n$ of a braid $b=\sigma_{i_c}^{\epsilon_c}\cdots \sigma_{i_1}^{\epsilon_1} \in B_n$, given as an expression of the braid in the generators $\{\sigma_i\}$ of $B_n$.

Our algorithm is iterative, and maintains at step $j$ the image in $\He_n$ of the braid $b_{(j)} \colon= \sigma_{i_j}^{\epsilon_j}\cdots \sigma_{i_1}^{\epsilon_1}$, made of the first $j$ crossings of the input braid.

\subsection{Data structure and initialization}

For a braid in $B_n$, we represent its image in $\He_n$ by an array of coordinates (Laurent polynomials) in the basis $\{T_w\}$. Specifically, a vector $x = \sum_{w \in \Perm_n} a_w T_w \in \He_n$, with $a_w \in \Z[q^{\pm 1}]$, is represented by a length $n!$ array $\hecke$, such that $\hecke[j] = a_w$ whenever the permutation $w$ has index $F_{L(w)} = j$. 

Recall that the one-line word $w(1)\cdots w(n)$ of a permutation $w$, and its index $F_{L(w)}$ can be obtained from one another in $O(n\log n)$ operations. Note also that the maximal value $n!-1$ of an index is below $2^{64}$ when $n \leq 20$, which will always be the case in this article ; permutation indices are stored as standard integers. 

The image of $\sigma_{i_1}$ is $T_{s_{i_1}}\in \He_n$, and Lemma~\ref{lem:inv}, shows that the image of $\sigma_{i_1}^{-1}$ is $q^{-2}T_{s_{i_1}}+(1-q^{-2}) \in \He_n$. 
Consequently, if $\epsilon_1=1$, at the initialization, only the coefficient corresponding to $s_{i_1}$ is $1$, all other ones being $0$. If $\epsilon_1=-1$, then we have the coefficient $1-q^{-2}$ at index $0$, $q^{-2}$ at the index representing $s_{i_1}$, and $0$ elsewhere.

\subsection{Inductive step}
We suppose that at the beginning of step $j$ of the algorithm, $\hecke$ stores the coordinates of the image of $\sigma_{i_{j-1}}^{\epsilon_{j-1}}\cdots \sigma_{i_1}^{\epsilon_1}$, that we update with the next generator $\sigma_{i_{j}}^{\epsilon_{j}}$.

Since $\psi$ is a group homomorphism, we have $\psi(b_{(j)}) = T^{\epsilon_j}_{\sigma_{i_j}} \psi(b_{(j-1)})$. We consequently need to update the coordinates of $\hecke$ following the local relations described in Lemmas~\ref{lem:prod_hecke} and~\ref{lem:inv}. We do so in the single pass through $\hecke$, without allocating more than $O(1)$ Laurent polynomials in memory:

\begin{algorithm}[H]
\SetAlgoLined
    \For{$i \leftarrow 0$ \KwTo $n!-1$}{
      compute $w$ such that $F_{L(w)} = i$\;  
        \If{$\ell(s_{i_j} w) > \ell(w)$}{
          set $\idx_w \leftarrow F_{L(w)}$;$\ $ $\idx_{s_{i_j} w} \leftarrow F_{L(s_{i_j} w)}$;$\ $ $a_w \leftarrow \hecke[\idx_w]$;$\ $ $a_{s_{i_j}w} \leftarrow \hecke[\idx_w]$\;
          \If{$\epsilon_j = +1$}{
            set $\hecke[\idx_w] \leftarrow q^2 a_{s_{i_j} w}$; $\ $ 
            set $\hecke[\idx_{s_{i_j} w}] \leftarrow a_w + (1-q^2) a_{s_{i_j} w}$\;
          }
          \Else{
            set $\hecke[\idx_w] \leftarrow (1-q^{-2}) a_{w} + a_{s_{i_j} w}$; $\ $ set $\hecke[\idx_{s_{i_j} w}] \leftarrow q^{-2} a_w$\;
          }
        }
    }
\caption{From $\hecke$ representing $\psi(b_{(j-1)})$ to representing $\psi(b_{(j)})$}
\label{algo:induction}
\end{algorithm}

\subsection{Correctness}
\label{subsec:correctness}
The correctness of the initialization phase is guaranteed by Lemmas~\ref{lem:prod_hecke} and~\ref{lem:inv}. We focus on the inductive step. 

Consider any permutation $w \in \symG$. According to Lemmas~\ref{lem:prod_hecke} and~\ref{lem:inv}, both $T^{\pm 1}_{s_{i_j}} T_w$ and $T^{\pm 1}_{s_{i_j}} T_{s_{i_j} w}$ give linear combinations of $T_w$ and $T_{s_{i_j} w}$, because $s_{i_j}(s_{i_j} w) = w$. Additionally, for any permutation $w'\notin \{w,s_{i_j} w\}$, the product $T^{\pm 1}_{s_{i_j}} T_{w'}$ contains no term $T_w$ or $T_{s_{i_j} w}$. Thus the pair of coordinates of $\hecke$ corresponding to permutations $w$ and $s_{i_j} w$ can be updated independently from the rest of the array, and their new values is a combination of their previous values. We update them at once in the core of the (outer) {\tt if} loop of Algorithm~\ref{algo:induction}.

Finally, the condition of the (outer) {\tt if} loop guarantees that we proceed to the update once. Indeed, the length $\ell$ being the number of inversions, we have $\ell(s_{i_j} w) \in \{\ell(w)-1,\ell(w)+1\}$. In consequence, considering the two permutations $w$ and $(s_{i_j} w)$, multiplying by $s_{i_j}$ exactly increases the length once and decreases the length once.

In light of the above, the induction algorithm does compute $\psi(b_{(j)})$ from $\psi(b_{(j-1)})$ and, at the end of the algorithm, we are left with $\psi(b)=\sum_{w\in \symG_n} a_w T_w$.

\subsection{Complexity}

\begin{theorem} \label{thm:complexity_hecke}
Given $b \in B_n$ represented by $N$ generators ($N$ crossings), the algorithm above computes the Hecke representation $\psi(b)=\sum_{w\in \symG_n} a_w T_w$ in $O(n! (n\log n) \times N)$ operations and $O(n! \times N)$ algebraic operations in $\Z[q^{\pm 1}]$, storing $n! + O(1)$ Laurent polynomials in memory.
\end{theorem}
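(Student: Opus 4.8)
The plan is to verify the three resource bounds claimed in Theorem~\ref{thm:complexity_hecke} by accounting for the cost of the inductive step (Algorithm~\ref{algo:induction}) and then multiplying by the number of iterations $N$. First I would observe that the outer loop of Algorithm~\ref{algo:induction} runs over all $n!$ permutation indices $i \in \{0,\ldots,n!-1\}$. For each index, the work decomposes into two kinds of cost: \emph{combinatorial} bookkeeping (decoding $i$ to the one-line word $w$, computing lengths and the index $F_{L(s_{i_j}w)}$) and \emph{algebraic} manipulation of the Laurent-polynomial coordinates.

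For the combinatorial cost, I would invoke Lemma~\ref{lem:basic_perm}: recovering the one-line word $w$ from its index $i = F_{L(w)}$ takes $O(n\log n)$ operations, and conversely computing the index $F_{L(s_{i_j}w)}$ from the one-line word of $s_{i_j}w$ takes $O(n\log n)$ operations. Deciding the test $\ell(s_{i_j}w)>\ell(w)$ reduces, as noted in Section~\ref{subsec:correctness}, to comparing the two entries at positions $i_j,i_j+1$ of the one-line word of $w$, which is $O(1)$ once $w$ is decoded; and applying $s_{i_j}$ to $w$ is a single transposition of adjacent entries, again $O(1)$. Hence each of the $n!$ iterations costs $O(n\log n)$ plain operations, giving $O(n!\,(n\log n))$ per inductive step and $O(n!\,(n\log n)\times N)$ over all $N$ generators.

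For the algebraic cost, the key point is that each iteration performs only a constant number of operations in $\Z[q^{\pm 1}]$ — reading the two coordinates $a_w, a_{s_{i_j}w}$, scaling by the monomials $q^{\pm 2}$ and $1-q^{\pm 2}$, and one addition — so that exactly $O(1)$ Laurent-polynomial operations occur per index. Summing over the $n!$ indices and the $N$ steps yields $O(n!\times N)$ algebraic operations, matching the claim. Crucially, the outer \texttt{if} guard ensures, via the length argument of Section~\ref{subsec:correctness}, that the coordinate pair $\{w, s_{i_j}w\}$ is visited and updated exactly once, so no double-counting inflates this constant.

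For the memory bound I would note that the array $\hecke$ permanently stores $n!$ Laurent polynomials, one per basis element $T_w$, and that Algorithm~\ref{algo:induction} updates the array \emph{in place}: at each iteration it only caches the two old values $a_w, a_{s_{i_j}w}$ and the two indices, which is $O(1)$ additional algebraic elements (plus $O(n)$ integer scratch space for the one-line word, absorbed into the operation count). This gives the $n!+O(1)$ stored Laurent polynomials. The main subtlety to handle carefully is precisely this \emph{single-update} guarantee: one must confirm that decoding-then-re-encoding through the factorial number system is consistent and that the guard $\ell(s_{i_j}w)>\ell(w)$ selects exactly one of the two permutations in each pair $\{w,s_{i_j}w\}$, so that the in-place update neither skips nor overwrites a coordinate twice — this is the step where an off-by-one or a re-encoding error would silently corrupt the result, and it is the part of the argument I would check most carefully.
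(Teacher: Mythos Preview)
Your proposal is correct and follows essentially the same approach as the paper: account for $O(n\log n)$ combinatorial work and $O(1)$ algebraic work per entry of the length-$n!$ array during each of the $N$ inductive steps, then observe that the in-place update uses only $O(1)$ extra polynomials. Your write-up is in fact more careful than the paper's own proof---you explicitly invoke Lemma~\ref{lem:basic_perm} for the encoding/decoding cost and spell out the single-update guarantee---whereas the paper's version is a terse paragraph (and even contains a typo, writing ``$n$ times'' where it means ``$N$ times'').
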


\begin{proof}
The initialization of the vector with the first crossing requires $O(n\log n)$ operations to convert permutations into their indices, and $O(1)$ arithmetic operations in $\Z[q^{\pm 1}]$ to initialize a constant number of entries in $\hecke$. The induction step reads the array $\hecke$ (of length $n!$) once, and every time converts a constant number of permutations into their indices ($O(n\log n)$) and performs at most $O(1)$ operations in $\Z[q^{\pm 1}]$. Finally, we run the induction exactly $n$ times.
\end{proof}

\begin{remark}
Note that the algorithm above is rather simple, and the constant hidden by the big-O notation is rather small. This matters for running times in practice ; see Section~\ref{sec:experimental_performance}.
\end{remark}

\begin{remark}\label{rem:parallel}
Note that the main {\bf for} loop of Algorithm~\ref{algo:induction} can be naturally parallelized, considering that pairs of indices in $\hecke$, corresponding to permutations $w$ and $s_{i_j}w$, where $s_{i_j}$ is the transposition corresponding to the new braid generator $\sigma_{i_j}^{\epsilon_j}$, are treated independently from other entries. We use a parallel and a non-parallel implementation of the {\bf for} loop in Section~\ref{sec:experimental_performance}.   
\end{remark}

\section{Computing the HOMFLY-PT polynomial}
\subsection{The HOMFLY-PT polynomial}

The HOMFLY-PT polynomial~\cite{HOMFLY,PT} is a famous 2-variable knot invariant, that can be specialized to both the Jones polynomial and the Alexander polynomial. If one starts from a knot presented as a braid closure $\hat{b}$ (see Figure~\ref{fig:intro_fig}), then, up to renormalization, the invariant is computed by taking a trace of the image of the braid $b$ in the Hecke algebra. We now make this more precise, and we refer to Appendix~\ref{app:homflypthecke} for another presentation of the invariant.

As in Section~\ref{sec:comp_hecke}, we start from a braid $b=\sigma_{i_c}^{\epsilon_c}\cdots \sigma_{i_1}^{\epsilon_1}$, and pre-compute the array $\hecke$ encoding the Hecke representation $\psi(b) = \sum_{w\in \Perm_n} a_wT_w$ with the algorithm of Theorem~\ref{thm:complexity_hecke}.

Our strategy is to simplify, in a single pass through $\hecke$, the entries of the array, using the defining property of the trace, \ie, $\Tr(\psi(bb')) = \Tr(\psi(b'b))$ for $b,b'$ braids in $B_n$. We do not simplify the array all the way down, but instead only keep entries corresponding to permutations $w$ with a simple form (called {\em annularly reduced}), and for which we know how to compute the participation to the final HOMFLY-PT polynomial.

Before describing the algorithm, we introduce some definitions and lemmas.

\begin{definition} \label{def:annred}
An element $w\in \Perm_n$ is said to be \emph{annularly reduced} (see Figure~\ref{fig:annRed}) if it is a product of cycles $r_i$ so that:
\begin{itemize}
\item $r_i$ has support $(j_i,j_i+1,\cdots,j_i+l_i)$ for some $j_i$ and $l_i$;
\item $r_i=(j_i+l_i,j_i,j_i+1,\cdots, j_i+l_i-1)$.
\end{itemize}
\end{definition}

\begin{figure}[t]
\centering
\[
\begin{tikzpicture}[anchorbase,scale=.6]
\draw (0,0) to [out=90,in=-90] (2,2);
\draw (1,0) to [out=90,in=-90] (0,2);
\draw (2,0) to [out=90,in=-90] (1,2);
\draw (3,0) to [out=90,in=-90] (6,2);
\draw (4,0) to [out=90,in=-90] (3,2);
\draw (5,0) to [out=90,in=-90] (4,2);
\draw (6,0) to [out=90,in=-90] (5,2);
\draw (7,0) to [out=90,in=-90] (9,2);
\draw (8,0) to [out=90,in=-90] (7,2);
\draw (9,0) to [out=90,in=-90] (8,2);
\end{tikzpicture}
\]
\caption{An annularly reduced element in $\Perm_{10}$}
\label{fig:annRed}
\end{figure}

Our algorithm reduces braids to annularly reduced ones, using the following lemma.

\begin{lemma} \label{lem:lowerOrder}
  Consider $w\in \Perm_n$, and define, if it exists:
\[
i_0:=\min\{i\;|\; w(i)<w(i)-1\}.
\]
Then $s_{i_0-1}ws_{i_0-1}$ and $ws_{i_0-1}$ are lower than $w$ in the lexicographic order.
\end{lemma}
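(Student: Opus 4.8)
The plan is to argue directly from the one-line notation, exploiting that the lexicographic order compares the words $w(1)\cdots w(n)$ from left to right: to prove that some $w' <_{\operatorname{lex}} w$ it suffices to locate the \emph{leftmost} index $p$ at which the words of $w'$ and $w$ differ and to check that $w'(p) < w(p)$. For each of the two candidates $w s_{i_0-1}$ and $s_{i_0-1} w s_{i_0-1}$ I would compute precisely which entries of the one-line word are altered relative to $w$, identify the leftmost altered entry, and verify that it strictly decreases. The engine of the whole argument is the extremal (minimality) property of $i_0$: by definition $i_0$ is the \emph{first} index witnessing the failure of the annularly reduced shape of Definition~\ref{def:annred}, so every entry strictly to its left already has that shape, and this is exactly what will let me control the entries to the left of the critical position.

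First I would treat the one-sided multiplication $w s_{i_0-1}$. Multiplication by the adjacent transposition $s_{i_0-1}$ interchanges just two entries of the one-line word (the two entries carrying the symbols $i_0-1$ and $i_0$). I would use the inequality defining $i_0$ to show that the leftmost of these two entries is the critical index, that its value drops strictly there, and that the second altered entry lies strictly to its right; minimality of $i_0$ guarantees that all entries before the critical index are untouched because they already satisfy the annularly reduced constraint. This yields $w s_{i_0-1} <_{\operatorname{lex}} w$ with essentially no case analysis.

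The delicate part, and the step I expect to be the main obstacle, is the conjugation $s_{i_0-1} w s_{i_0-1}$. Unlike a one-sided multiplication, it performs \emph{both} a swap of two positions and a swap of the two symbols $i_0-1, i_0$, so a priori its one-line word can differ from that of $w$ at several indices, possibly even to the left of the critical index, and such a difference could be an \emph{increase}. The crux is therefore to rule out any increase occurring to the left of the critical index. I would do this once more from the minimality of $i_0$: I must show that neither symbol $i_0-1$ nor $i_0$ occurs, to the left of the critical index, in a slot where conjugation would flip it upward — the presence of such a symbol would exhibit an earlier violation and contradict the choice of $i_0$ as the first defect. Granting this, the leftmost genuine change produced by the conjugation sits at the same critical index as before and is again a strict drop, giving $s_{i_0-1} w s_{i_0-1} <_{\operatorname{lex}} w$.

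Finally I would record the consequence that the reduction algorithm actually uses. When no such $i_0$ exists, the one-line word of $w$ meets the annularly reduced constraint at every index, so $w$ is already annularly reduced; and since $\leq_{\operatorname{lex}}$ is a well-order on the finite set $\Perm_n$, iterating the strictly lex-decreasing moves provided by this lemma must terminate, necessarily at an annularly reduced permutation. I would also note the compatibility needed downstream: the conjugation $s_{i_0-1} w s_{i_0-1}$ leaves the trace unchanged, while the one-sided move $w s_{i_0-1}$ is accounted for through the Hecke relations of Lemmas~\ref{lem:prod_hecke} and~\ref{lem:inv}, so that applying the reduction does not affect the invariant eventually computed.
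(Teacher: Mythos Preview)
Your plan matches the paper's: write $w = w' s_{i_0-1}$ as a reduced product (this holds because $w(i_0-1) > w(i_0)$, which follows from the minimality of $i_0$), and then argue that both $w' = w s_{i_0-1}$ and $s_{i_0-1} w' = s_{i_0-1} w s_{i_0-1}$ are lex-smaller than $w$. The paper carries out the first step and then, from a picture, simply asserts that ``in both cases the lexicographic order gets lowered''; your sketch is more candid in flagging the conjugation as the delicate point.

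The crucial step you propose for the conjugate, however, does not go through. You claim that if the symbol $i_0-1$ sits at some position $j < i_0-1$ (so that the left multiplication by $s_{i_0-1}$ would flip it \emph{up} to $i_0$), then this ``would exhibit an earlier violation and contradict the choice of $i_0$ as the first defect''. That inference is wrong: minimality of $i_0$ only yields $w(j) \ge j-1$ for $j < i_0$, and $w(j) = i_0-1$ with $j < i_0-1$ is perfectly compatible with that bound. Concretely, take $w = 2413 \in \Perm_4$. Here $w(3)=1 < 2$ gives $i_0 = 3$ and $s_{i_0-1} = s_2$; one finds $w s_2 = 2143$ and $s_2 w s_2 = 3142$, and $3142 >_{\operatorname{lex}} 2413$ since the words first differ at position $1$, where $3 > 2$. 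So the conjugate is \emph{not} lex-smaller, and no appeal to the minimality of $i_0$ can rescue this without modifying either the statement or the choice of reduction move; the paper's bare assertion from the picture has the same gap. (A smaller side issue: in the paper's convention --- the one its own right-descent argument uses --- right multiplication by $s_{i_0-1}$ swaps the entries at \emph{positions} $i_0-1$ and $i_0$, not ``the two entries carrying the symbols $i_0-1$ and $i_0$'' as you write; it is left multiplication that swaps values. Your conclusion $w s_{i_0-1} <_{\operatorname{lex}} w$ is still correct, but via the position swap together with $w(i_0-1) > w(i_0)$.)
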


\begin{proof}
We first argue that $w=w's_{i_0-1}$ as a reduced word. Indeed, from the definition of $i_0$, $w(i_{0}-1)\geq i_{0}-2$ and thus $w(i_{0}-1)\geq w_{i_0}$. The relative order of $i_{0}-1$ and $i_0$ is exchanged under $w$ and thus $w=w's_{i_{0}-1}$ with $\ell(w)=\ell(w')+1$. We are thus in the following situation:
\[
\begin{tikzpicture}[anchorbase,scale=.7]
\foreach \x in {-1,1,2,4}
    \node at  (\x,0) {$\bullet$};
\foreach \x in {-1,1,2,4}
    \node at  (\x,3) {$\bullet$};
\draw (2,0) to [out=90,in=-90]  node [at start, below] {$i_0$} node [at end, above] {$i_0$} (-1,3);
\draw (1,0) to [out=90,in=-90]  node [at start, below] {$i_0-1$} node [at end, above] {$i_0-1$} (4,3);
\node at (1,3.4) {$k$};
\node at (2,3.4) {$l$};
\node at (0,0) {$\cdots$};
\node at (3,0) {$\cdots$};
\node at (0,3) {$\cdots$};
\node at (3,3) {$\cdots$};
\node [rectangle, draw] at (1.5,-1.5) {$w$};
\end{tikzpicture}
\qquad
\begin{tikzpicture}[anchorbase,scale=.7]
\foreach \x in {-1,1,2,4}
    \node at  (\x,0) {$\bullet$};
\foreach \x in {-1,1,2,4}
    \node at  (\x,2) {$\bullet$};
\foreach \x in {-1,1,2,4}
    \node at  (\x,3) {$\bullet$};
\draw (2,0) to [out=90,in=-90]  node [at start, below] {$i_0$}  (4,2) -- (4,3) node [at end, above] {$i_0$};
\draw (1,0) to [out=90,in=-90]  node [at start, below] {$i_0-1$}  (-1,2) -- (-1,3) node [at end, above] {$i_0-1$};
\draw (1,2) to [out=90,in=-90] node [at end,above] {$k$} (2,3) ;
\draw (2,2) to [out=90,in=-90]  node [at end,above] {$l$} (1,3);
\node at (0,0) {$\cdots$};
\node at (3,0) {$\cdots$};
\node at (0,3) {$\cdots$};
\node at (3,3) {$\cdots$};
\node [rectangle, draw] at (1.5,-1.5) {$s_{i_0-1}w'$};
\end{tikzpicture}
\qquad
\begin{tikzpicture}[anchorbase,scale=.7]
\foreach \x in {-1,1,2,4}
    \node at  (\x,0) {$\bullet$};
\foreach \x in {-1,1,2,4}
\node at  (\x,2) {$\bullet$};
\node at (1,3.4) {\vphantom{$k$}};
\draw (2,0) to [out=90,in=-90]  node [at start, below] {$i_0$}   node [at end, above] {$i_0$} (4,2);
\draw (1,0) to [out=90,in=-90]  node [at start, below] {$i_0-1$} node [at end, above] {$i_0-1$} (-1,2) ;
\node at (0,0) {$\cdots$};
\node at (3,0) {$\cdots$};
\node at (0,2) {$\cdots$};
\node at (3,2) {$\cdots$};
\node at (1,2.4) {$k$};
\node at (2,2.4) {$l$};
\node [rectangle, draw] at (1.5,-1.5) {$w'$};
\end{tikzpicture}
\]

In both cases the lexicographic order gets lowered.
\end{proof}

The second piece of argument we will need is the following one.

\begin{lemma} \label{lem:annRed}
If $w\in \Perm_n$ is so that $w(i)\geq i-1$ for all $i$, then $w$ is annularly reduced.
\end{lemma}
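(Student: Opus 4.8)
The plan is to prove the statement by induction on $n$, peeling off one annularly reduced cycle at a time starting from the block containing the index $1$. First I would record a concrete reading of Definition~\ref{def:annred} that is convenient to check: $w$ is annularly reduced exactly when $\{1,\dots,n\}$ partitions into consecutive blocks $[a,b]$ such that $w(a)=b$ and $w(i)=i-1$ for all $a<i\le b$, each such block being one of the cycles $r_i$. It is also useful to note that the hypothesis $w(i)\ge i-1$ for all $i$ is equivalent, after taking inverses, to $w^{-1}(j)\le j+1$ for all $j$; that is, $w$ lowers any index by at most one, and the preimage of any value $j$ never exceeds $j+1$. The base case $n\le 1$ is immediate, since the identity is annularly reduced with singleton blocks.

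Next I would isolate the block containing $1$. Set $c:=w(1)$. The bound $w^{-1}(j)\le j+1$ shows that for every $j\in\{1,\dots,c-1\}$ the preimage $w^{-1}(j)$ lies in $\{1,\dots,c\}$, and it cannot equal $1$ since $w(1)=c>j$; hence $w^{-1}(1),\dots,w^{-1}(c-1)$ are $c-1$ distinct elements of the $(c-1)$-element set $\{2,\dots,c\}$, so they fill it completely. Therefore $w$ maps $\{2,\dots,c\}$ bijectively onto $\{1,\dots,c-1\}$ and, together with $w(1)=c$, stabilises $\{1,\dots,c\}$. To pin down the values inside this block I would reindex: the map $i\mapsto w(i+1)$ is a permutation of $\{1,\dots,c-1\}$ satisfying $w(i+1)\ge i$, and a permutation $g$ with $g(i)\ge i$ for all $i$ must be the identity, since its displacements $g(i)-i$ are nonnegative yet sum to zero. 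This forces $w(i)=i-1$ for $2\le i\le c$, so $[1,c]$ is precisely an annularly reduced block (a fixed point when $c=1$).

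Finally, since $w$ stabilises $\{1,\dots,c\}$ it also permutes the complement $\{c+1,\dots,n\}$, and the order-preserving relabelling $i\mapsto i-c$ turns $w|_{\{c+1,\dots,n\}}$ into a permutation of $\{1,\dots,n-c\}$ that still satisfies $w(i)\ge i-1$, as the uniform shift does not affect the inequality. By the induction hypothesis this smaller permutation is annularly reduced, and translating its blocks back by $+c$ exhibits $w$ as a disjoint product of annularly reduced cycles. I expect the main obstacle to be the middle step: identifying the correct block boundary $c=w(1)$ and proving that $\{1,\dots,c\}$ is stable and carries exactly the rotation, which is where the inverse bound $w^{-1}(j)\le j+1$ and the elementary ``$g(i)\ge i\Rightarrow g=\id$'' fact do the real work; the induction bookkeeping around the relabelling is then routine.
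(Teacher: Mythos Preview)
Your proof is correct. The approach is genuinely different from the paper's. The paper argues globally and pictorially: it first observes that every left-moving strand moves by exactly one (since $w(i)\ge i-1$ forces $w(i)=i-1$ whenever $w(i)<i$), then notes that the remaining ``free'' bottom and top endpoints must be matched by strands that go straight up or to the right, and asserts that the only such matching is the order-preserving one (leftmost free bottom to leftmost free top, and so on), which visibly gives an annularly reduced permutation. Your argument instead proceeds by induction on $n$, explicitly identifying the first block $[1,c]$ with $c=w(1)$, using the inverse bound $w^{-1}(j)\le j+1$ together with the elementary fact that a permutation $g$ with $g(i)\ge i$ everywhere must be the identity, and then recursing on $\{c+1,\dots,n\}$. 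Your route is more self-contained and spells out every step, in particular the uniqueness that the paper's ``there is a unique solution'' leaves to the reader; the paper's route is shorter and gives a clearer global picture of why the cycle structure emerges. Both lead to the same block decomposition.
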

\begin{proof}
One first considers only the strands that move to the left. By hypothesis, those can only move one step to the left:
\[
\begin{tikzpicture}[anchorbase,scale=.7]
\foreach \x in {1,2,...,11}
         {\node at (\x,0) {$\bullet$};}
\foreach \x in {1,2,...,11}
         \node at (\x,2) {$\bullet$};
\foreach \x in {2,3,6,7,8,10,11}
         \draw (\x,0) to [out=90,in=-90] (\x-1,2);
\end{tikzpicture}
\]
We are left with relating free ends with strands that can only go straight up or right. There is a unique solution, that consists of pairing the leftmost free end at the bottom with the leftmost free end at the top, and so on, which yields an annularly reduced element.
\end{proof}

We refer to Definition~\ref{def:HomflyPt} in the appendix for a precise definition of the HOMFLY-PT polynomial $P$. For our purpose, it suffices to say that the polynomial $P(\hat{b})$ can be obtained from the vector $\hecke$ by:
\begin{itemize}
  \item reducing the vector $\hecke$ in $\Tr(\He_n)$ using trace moves $b_1b_2=b_2b_1$. This leaves us with $\sum_{w\in \mathcal{A}}a'_wT_w$ where $\mathcal{A}$ denotes the set of annularly reduced words in $\Perm_n$ from Definition~\ref{def:annred};
\item evaluating annularly reduced elements on $\prod_{i}(a^{1-l_i}q^{l_i-1}\frac{a-a^{-1}}{q-q^{-1}})$ (with the notations of Definition~\ref{def:annred}). See Lemma~\ref{lem:s_unknot} for a justification.
\end{itemize}

\subsubsection{Simplifying the trace} \label{sec:homflypt_step2}

We now want to simplify the vector $\hecke$ in a trace process. Given a basis element $T_w$, we will want, when possible, to simplify its image $\Tr(T_w)\in \Tr(\He_n)$. We will do that by ordering elements in $w$ by the lexicographic order $\leq_{\operatorname{lex}}$ of their one-line word $w(1)\cdots w(n)$, so that the simplification can be achieved in one scan through the set $\{T_w\}_{w\in \Perm_n}$.

\begin{remark}
Two elements that are annularly reduced might still be conjugate: the order of the cycles can be permuted. This results in keeping more non-zero entries in the vector we consider than necessary, but has the advantage of yielding a particularly simple algorithm.
\end{remark}

We traverse the array $\hecke$ from right to left, \ie, considering permutations $w$ with {\em decreasing} position in lexicographic order: for every such $w$, 
\begin{enumerate}
\item look for the first $i_0$ so that $w(i_0)<i_0-1$.
\item If there is no such $i_0$ then $w$ is already annularly reduced (Lemma~\ref{lem:annRed}).
\item Otherwise it can be written $w=w's_{i_0-1}$. Replace $T_w$ by $T_{i_0-1}T_w'$ which expresses as a single or a sum of two basis elements (Lemma~\ref{lem:prod_hecke}). In both cases, the elements that appear are strictly lower in the lexicographic order (Lemma~\ref{lem:lowerOrder}), and the coefficient of $T_{i_0-1}T_w'$ is removed and added to entries of $\hecke$ to the left of $w$. 
\end{enumerate}

\subsubsection{Evaluating the polynomial}

We now are left with $v=\sum_{w\in \mathcal{A}}a'_wT_w$. The result we are looking for is:
\[
P(\hat{b})=\sum_{w\in \mathcal{A}}a'_wP(\hat{b_w})\; \text{($b_w$ is the positive lift of $w\in \Br_n$).}
\]

The value for $P(\hat{b_{w}})$ above is recorded in the following lemma.
\begin{lemma} \label{lem:ValueAnnRed}
  The HOMFLY-PT polynomial of an annularly reduced braid with cycles of lengths $l_1,\cdots, l_k$ is $\prod_{i=1}^k(a^{1-l_i}q^{l_i-1}\frac{a-a^{-1}}{q-q^{-1}})$.
\end{lemma}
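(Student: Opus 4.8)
The plan is to compute the HOMFLY-PT polynomial of the braid closure $\hat{b_w}$ when $w$ is annularly reduced, by exploiting the fact that the closure splits as a disjoint union of links, one for each cycle $r_i$ in the factorization of $w$. Since the HOMFLY-PT polynomial is multiplicative under disjoint union (up to the standard normalization factor), it suffices to treat a single cycle and then take the product over the $k$ cycles. So the first step is to reduce to the case $k=1$, where $w$ is a single cycle $r=(j+l,\,j,\,j+1,\,\dots,\,j+l-1)$ of length $l+1$ on the support $(j,\dots,j+l)$, and to argue that the closure of the corresponding positive braid is a single unknotted circle.

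Next I would identify the closure $\hat{b_r}$ of this single cycle explicitly. The positive lift of such a cycle is (up to conjugacy, which does not affect the closure by the trace relation $\widehat{b_1b_2}=\widehat{b_2b_1}$) a braid of the form $\sigma_{j}\sigma_{j+1}\cdots\sigma_{j+l-1}$ on the strands of its support, and the closure of this $(l{+}1)$-strand cyclic braid is a single unknot. The key computation is therefore the HOMFLY-PT polynomial of the closure of the elementary braid $\sigma_1\cdots\sigma_{l}$ in $B_{l+1}$: this is the unknot, but realized as a closed braid on $l+1$ strands, so its value is not simply the normalized unknot value $1$ but carries the framing/writhe correction coming from the $l$ positive crossings. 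I would compute this by repeatedly applying the skein relation, or equivalently by using the Ocneanu trace formula on $T_{s_1}\cdots T_{s_l}\in\He_{l+1}$, peeling off one Markov stabilization at a time; each stabilization on a positively-stabilized strand contributes a factor of $a^{-1}q$ (matching the exponents $a^{1-l}q^{l-1}$ after accumulating over the $l$ crossings), and the final unknot contributes the normalization $\frac{a-a^{-1}}{q-q^{-1}}$.

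Concretely, for the single cycle I expect to obtain $a^{1-l}q^{l-1}\,\frac{a-a^{-1}}{q-q^{-1}}$, where the exponent shift $1-l$ in $a$ and $l-1$ in $q$ is exactly the writhe correction for $l$ positive crossings arranged in a single stabilizing cascade, and the last factor is the reduced unknot value under the chosen normalization of the trace used to define $P$ (consistent with Definition~\ref{def:HomflyPt} and Lemma~\ref{lem:s_unknot} in the appendix). Taking the product over the $k$ independent cycles, which close up to $k$ disjoint unknotted circles on disjoint supports, yields
\[
P(\hat{b_w})=\prod_{i=1}^k\Bigl(a^{1-l_i}q^{l_i-1}\tfrac{a-a^{-1}}{q-q^{-1}}\Bigr),
\]
as claimed.

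The main obstacle I anticipate is pinning down the precise normalization constants. The clean product formula hinges on which Ocneanu-trace conventions and which HOMFLY-PT variable normalization are fixed in Definition~\ref{def:HomflyPt}; a different convention would rescale each factor by a monomial in $a$ and $q$. I would therefore lean on Lemma~\ref{lem:s_unknot} to fix the single-unknot value and the per-crossing stabilization weight, and the bulk of the work is bookkeeping: verifying that the $l$ crossings of a length-$(l{+}1)$ cycle contribute exactly $a^{1-l}q^{l-1}$ and that disjointness of the supports makes the trace factor as a genuine product with no cross terms.
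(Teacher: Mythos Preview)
Your proposal is correct and mirrors the paper's argument: the paper's proof is the one-line ``take the product of the individual cycle values, each of which is computed in Lemma~\ref{lem:s_unknot} via repeated positive Markov destabilization,'' which is exactly your reduction to a single cycle followed by peeling off one $a^{-1}q$ per stabilized crossing and finishing with the unknot value $\tfrac{a-a^{-1}}{q-q^{-1}}$. The only wrinkle is a harmless off-by-one in your bookkeeping (you set up the cycle on $l{+}1$ strands with $l$ crossings but then quote the exponent as $a^{1-l}q^{l-1}$, which corresponds to $l-1$ crossings); this is a notational slip inherited from the ambiguity between Definition~\ref{def:annred} and Lemma~\ref{lem:s_unknot}, not a mathematical error.
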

\begin{proof}
One takes the product of each cycle value, computed in the appendix (Lemma~\ref{lem:s_unknot}).
\end{proof}

The last step of the algorithm thus consists in replacing $T_w$ by the product of the values of the HOMFLY-PT polynomial on the cycles of $T_w$, and taking the sum for all $w$'s, in a single pass through the array $\hecke$.

\subsection{Complexity and correctness}
The correctness of the algorithm is guaranteed by the diverse lemmas proved along the way.

\begin{theorem}\label{thm:complexity_trace}
From the Hecke representation of a braid $b \in B_n$, input as an array $\hecke$ of length $n!$, of Laurent polynomials, in the $\{T_w\}_w$ basis, evaluating the HOMFLY-PT polynomial takes $O(n! (n\log n))$ operations and $O(n! \cdot n)$ algebraic operations, storing $n! + O(1)$ algebraic elements. 
\end{theorem}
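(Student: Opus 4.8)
The plan is to establish the complexity of the HOMFLY-PT evaluation (Theorem~\ref{thm:complexity_trace}) by analyzing separately the two phases of the algorithm described in Sections~\ref{sec:homflypt_step2} and the evaluation step: the trace-simplification pass and the final polynomial-evaluation pass. For each phase I would account for three quantities independently: the total number of elementary (integer/index) operations, the total number of algebraic operations in the coefficient ring, and the peak memory in Laurent polynomials.

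For the trace-simplification phase, I would argue that we traverse the array $\hecke$ of length $n!$ exactly once, from right to left. For each index $i$ we recover the one-line word $w$ from its index in $O(n\log n)$ operations by \reflem{basic_perm}(i). We then scan $w(1)\cdots w(n)$ to locate the first $i_0$ with $w(i_0)<i_0-1$, which costs $O(n)$, subsumed in $O(n\log n)$. If no such $i_0$ exists, $w$ is already annularly reduced (\reflem{annRed}) and we do nothing. Otherwise we apply \reflem{prod_hecke} to rewrite $T_w=T_{s_{i_0-1}}T_{w'}$, producing one or two basis elements; by \reflem{lowerOrder} these are strictly lower in lexicographic order than $w$, hence—by \reflem{basic_perm}(iii)—strictly to the left in $\hecke$. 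This is the crucial point guaranteeing that the single right-to-left sweep suffices: every element we create has already-to-be-visited support, so no element is ever revisited and the pass is genuinely linear in $n!$. Computing the target indices $F_{L(\cdot)}$ of the (at most two) new basis elements again costs $O(n\log n)$ by \reflem{basic_perm}(ii), and updating their coefficients in $\hecke$ is $O(1)$ algebraic operations. Summing over the $n!$ positions gives $O(n!\,(n\log n))$ operations and $O(n!)$ algebraic operations for this phase, with no extra allocation beyond $O(1)$ scratch Laurent polynomials.

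For the evaluation phase, I would argue that we pass once more through the (now annularly-reduced-supported) array $\hecke$. For each nonzero entry indexed by $w$, we reconstruct $w$ in $O(n\log n)$, read off its cycle lengths $l_1,\dots,l_k$ in $O(n)$, and by \reflem{ValueAnnRed} substitute $T_w$ by the product $\prod_{i=1}^{k}\bigl(a^{1-l_i}q^{l_i-1}\tfrac{a-a^{-1}}{q-q^{-1}}\bigr)$, accumulating into the running total. Since $\sum_i l_i\le n$, there are at most $n$ cycle factors per permutation, so the substitution and accumulation cost $O(n)$ algebraic operations per entry, hence $O(n!\cdot n)$ algebraic operations overall, and $O(n!\,(n\log n))$ elementary operations for the index conversions. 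Combining the two phases yields the claimed $O(n!\,(n\log n))$ operations and $O(n!\cdot n)$ algebraic operations, while the memory footprint stays at $n!+O(1)$ algebraic elements since both passes operate in place.

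The main obstacle I anticipate is justifying rigorously that the right-to-left single pass terminates correctly and visits each created element—this rests entirely on the interaction of \reflem{lowerOrder} (strict lexicographic decrease under the rewriting) with \reflem{basic_perm}(iii) (lexicographic order coincides with index order). I would make explicit that because rewriting strictly decreases lexicographic position, every basis element introduced by processing $w$ lands at an index strictly smaller than that of $w$, i.e.\ to its left, and is therefore encountered later in the right-to-left sweep and processed in turn; thus the procedure reaches a fixed point supported on $\mathcal{A}$ after the single sweep, with no risk of cycling or of needing a second pass. The remaining counts are then routine bookkeeping, each step contributing $O(n\log n)$ elementary and $O(1)$ (resp.\ $O(n)$ in the evaluation phase) algebraic operations per array position.
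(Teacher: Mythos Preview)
Your proposal is correct and follows essentially the same two-phase analysis as the paper's proof: a right-to-left simplification pass costing $O(n!\,n\log n)$ elementary and $O(n!)$ algebraic operations, followed by an evaluation pass costing $O(n!\,n\log n)$ elementary and $O(n!\cdot n)$ algebraic operations. If anything, you are slightly more explicit than the paper in invoking \reflem{basic_perm}(iii) to justify that the lexicographic decrease of \reflem{lowerOrder} translates into a strict leftward move in the array, which is what makes the single sweep sound; the paper relies on this implicitly via the description in Section~\ref{sec:homflypt_step2}.
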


\begin{proof}
  Recall from Section~\ref{sec:homflypt_step2} that we reduce the vector to only keep annularly reduced elements in a single pass on the array $\hecke$ of length $n!$. Computing $w$ from its index and finding the first $i_0$ so that $w(i_0)<i_0-1$ takes $O(n\log n)$ operations. Then one finds $w'$ so that $w=w's_{i_0-1}$. This amounts to computing $ws_{i_0-1}$, which costs $O(1)$ operations. Then computing $T_{i_0-1}T_{w'}$ costs two multiplications in $\Z[q^{\pm 1}]$. In total, for this phase, we use $O(n! n \log n)$ combinatorial operations and $O(n!)$ arithmetic operations in $\Z[q^{\pm 1}]$.

  Finally, thanks to Lemma~\ref{lem:ValueAnnRed}, if suffices to know the number of cycles and their lengths to evaluate an annularly reduced element in $O(n)$ arithmetic operations in $\Z[q^{\pm 1}]$ at most (there are at most $n$ disjoint cycles). Identifying cycles and their lengths is done directly by reading the one-line word of a permutation $w$ ($O(n \log n)$ operations to compute the one-line word from w's index, then $O(n)$ to identify cycles), yielding a last step in $O(n! \cdot n \log n)$ combinatorial operations and $O(n! \cdot n)$ arithmetic operations in $\Z[q^{\pm 1}]$. Summing the complexity gives the theorem.
\end{proof}

\begin{theorem}\label{thm:complexity_homfltypt}
Given a braid $b \in B_n$ with $N$ crossings, the above algorithm computes its HOMFLY-PT polynomial in $O(n! (n\log n) \times N)$ operations and $O(n! \times N)$ algebraic operations, storing $n! + O(1)$ algebraic elements. 
\end{theorem}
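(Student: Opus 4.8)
The plan is to recognize this statement as an immediate corollary of the two complexity bounds already established, since the HOMFLY-PT algorithm runs in exactly two successive phases that have each been analyzed on their own. I would therefore not re-derive any per-step cost, but simply concatenate the two phases, add their bounds, and check that the leading phase dominates and that the storage is shared.

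First I would invoke Theorem~\ref{thm:complexity_hecke}: starting from the braid $b$ given by its $N$ generators, computing the Hecke representation $\psi(b)=\sum_{w\in\Perm_n}a_wT_w$ and storing it as the array $\hecke$ of length $n!$ costs $O(n!(n\log n)\times N)$ operations and $O(n!\times N)$ algebraic operations in $\Z[q^{\pm1}]$, while keeping $n!+O(1)$ Laurent polynomials in memory. Next I would invoke Theorem~\ref{thm:complexity_trace}, applied to this very array as its input: the trace-simplification pass together with the polynomial-evaluation pass costs $O(n!(n\log n))$ operations and $O(n!\cdot n)$ algebraic operations, and — crucially — it operates in place on $\hecke$, so that the storage stays at $n!+O(1)$ algebraic elements throughout.

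It then remains to sum the two contributions. For the combinatorial operations, $O(n!(n\log n)\times N)+O(n!(n\log n))=O(n!(n\log n)\times N)$ as soon as $N\ge 1$, so the Hecke phase absorbs the evaluation phase. For the algebraic operations the total is $O(n!\times N)+O(n!\cdot n)=O\!\left(n!\times(N+n)\right)$, and the storage bound $n!+O(1)$ holds because both phases share the same array.

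The only point genuinely requiring a word of justification — and the closest thing to an obstacle here — is collapsing $O(n!\times(N+n))$ to the stated $O(n!\times N)$. This holds precisely in the regime $N=\Omega(n)$, which I would argue is the relevant one: any braid whose closure is a single-component knot must have an $n$-cycle as its underlying permutation, hence at least $n-1$ crossings, giving $N\ge n-1$; more generally any braid genuinely linking all $n$ strands satisfies $N\ge n-1$, so the additive $n$ term is subsumed by $N$. I would record this as a short remark accompanying the bound rather than as a standing hypothesis, since it does not affect the leading-order complexity and matches the form already stated in Theorem~\ref{thm:main_hecke}.
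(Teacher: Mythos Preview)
Your proposal is correct and matches the paper's own proof almost exactly: the paper likewise just sums Theorems~\ref{thm:complexity_hecke} and~\ref{thm:complexity_trace} and absorbs the second into the first under the assumption $n\leq N$. Your extra paragraph justifying $N=\Omega(n)$ via the underlying permutation is a nice touch the paper omits, but otherwise the arguments are identical.
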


\begin{proof}
Summing the complexity of Theorem~\ref{thm:complexity_hecke} and Theorem~\ref{thm:complexity_trace}, all steps of the algorithm are covered; assuming $n \leq N$, Theorem~\ref{thm:complexity_hecke} dominates the computation.
\end{proof}

\begin{figure}[t]
\centering
\includegraphics[width=0.8\textwidth]{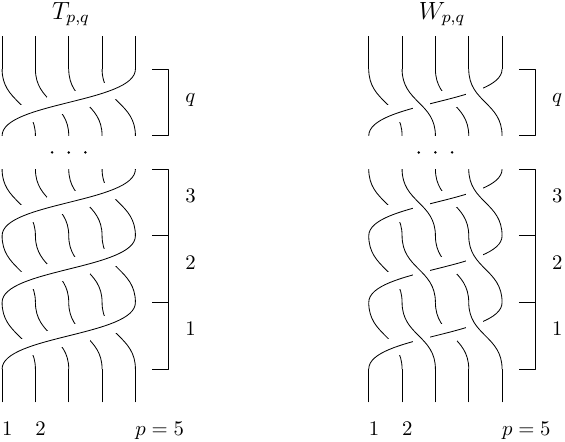}
\caption{Torus braid $T_{p,q}$ and weaving braid $W_{p,q}$.}
\label{fig:TWbraids}
\end{figure}

\subsection{Experimental performance}
\label{sec:experimental_performance}
We have compared our code with the advanced {\tt C++} library {\tt Regina}~\cite{regina}, implementing Burton's state of the art algorithm~\cite{burton:LIPIcs.SoCG.2018.18} for the HOMFLY-PT polynomial. We have run our experiments on an 8 cores, 32Gb RAM computer; parallel implementations use the library {\tt Intel::TBB}. We restrict our attention to braids whose closures lead to knots, because {\tt Regina} implements the HOMFLY-PT polynomial on knots exclusively. The computation of the Hecke representation, which largely dominates the timings for our computation of the HOMFLY-PT polynomial, can be implemented in parallel, as described in Remark~\ref{rem:parallel}, while {\tt Regina} does not admit a parallel implementation and the algorithm it implements may not parallelize as naturally as ours.

We have generated $p$-strands torus braids $T_{p,q}$ and $p$-strands weaving braids $W_{p,q}$, in their standard form (see Figure~\ref{fig:TWbraids}), keeping only those parameters $q$ for which the closure is a knot. We have used them as is in our algorithm. {\tt Regina} working with knot diagrams and treewidth, we have run simplification heuristics and the construction of the tree decomposition, before computing the HOMFLY-PT polynomial (and before starting the timer). However, for these knots, the braid presentation is expected to be minimal for both size and parameter. 

\begin{figure}[t]
\centering
\includegraphics[width=1\textwidth]{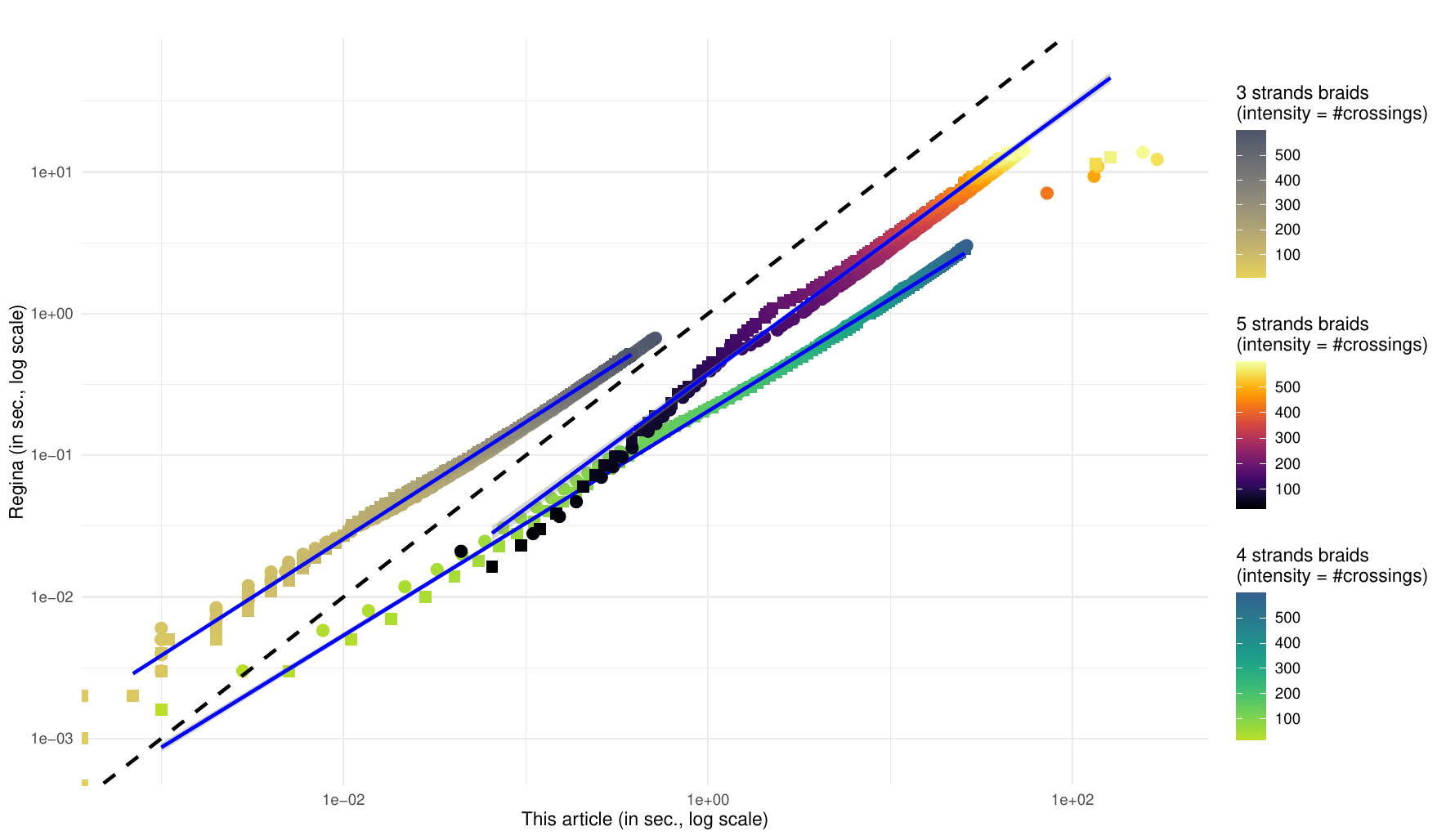}
\caption{Timings of our algorithm (without parallelism) and {\tt Regina}. Circle points represent torus braids, and square points represent weaving braids. The dashed line is $x=y$.}
\label{fig:timings}
\end{figure}

In Figure~\ref{fig:timings} we compare the running time of our algorithm {\em not using parallelism}, against the algorithm from {\tt Regina} for torus braids $T_{p,q}$ and weaving braids $W_{p,q}$, for $3 \leq p < 6$ and all values of $q = p+1, \ldots, q_{\max}$, such that the closures of the braids lead to knots (\ie, $\gcd(p,q) = 1$), with $q_{\max}$ picked such that the braids $T_{p,q_{\max}}$ and $W_{p,q_{\max}}$ have approximately $600$ crossings. The results are very similar for torus and weaving braids, highlighting the fact that the running times depends mostly on the combinatorics of the input diagrams, and not so much on the algebraic computations. Our implementation consistently outperforms {\tt Regina} on braids with more than 4 strands, by a factor ranging from 3 (shorter braids) to 9 (longer braids) for braids on 4 strands, a factor ranging from 3 to 4 for 5-braids. 

In Figure~\ref{fig:timings}, {\tt Regina} is faster on the smaller 3-strands examples, by a factor 3 on shorter braids, and a smaller factor 1.3 on the larger braids ; all these timings are below 0.3 seconds, and may be explained by a slower initialization of data on our side. The global tendency is that our algorithm, even on its non-parallel form, gets better on longer braids (the slopes $\alpha$ of interpolating lines is $0.78 \leq \alpha \leq 0.8$ on all examples) and on braids with more strands. 

On braids with $\geq$ 6 strands, {\tt Regina} starts to {\tt swap} due to excessive memory consumption, leading to impractical running times ; to the contrary, our implementation continues to be practical, Table~\ref{table:timings} and see Figure~\ref{fig:parallel} for much larger examples. 

In Figure~\ref{fig:parallel}, we compare the performance of our implementation with and without parallelism. Recall that the parallelization is on the array $\hecke$ of length $n!$ (working with an $n$-strands braid). In consequence, we see the gain of parallelism increasing with the number of strands: first, the gain range from a factor $1.6$ to $2.4$ on 3-strands, from $3.3$ to $4.54$ 4-strands, and from $3.8$ to $6.25$ on 5-strands. On braids with $6,7,8$-strands, the gain ranges between a factor $5$ to $7.14$, near the optimal 8 (on 8 cores); see Figure~\ref{fig:parallel}.

Finally, the memory consumption of our implementation is much more frugal than {\tt Regina}, and remains very applicable for large number of strands. The memory cost depends mostly on the number of strands; however, for a fixed number of strands, longer braids tend to consume slightly more memory because the Laurent polynomials stored in the array $\hecke$ tend to become more and more complicated ; we present the memory consumption of longer braids in Table~\ref{table:timings}).

\begin{table}
\centering
\begin{tabular}{rrrrr}
Braid & Num. of crossings & Timing parallel (8 cores) & Timing non-parallel & Peak memory \\
 $T_{6,41}$ & 205 & 1.6 sec. & 9.6 sec.& 8.1 Mb \\
 $T_{6,119}$ & 595 & 13.4 sec. & 95.1 sec. & 23.5 Mb \\
 $T_{7,36}$  & 216 & 10.8 sec. & 74 sec. &  45.6 Mb \\
 $T_{7,99}$  & 594 & 88.9 sec.  & 643 sec. &  109 Mb \\
 $T_{8,29}$ & 203 & 74.5 sec. & 502 sec. & 257 Mb \\
 $T_{8,49}$ & 343 & 272 sec. & 2265 sec. & 435 Mb \\
\end{tabular}
\caption{Timing and peak memory usage of our algorithm using parallelism on large torus braids (similar results are found for weaving braids).}
\label{table:timings}
\end{table}

In conclusion, our implementation is very competitive for knots represented by braids, even with a fairly large number of strands. It also combines very well with parallelism. We apply our parallel implementation in experimental mathematics in Section~\ref{sec:non-faithfulness} and discover new algebraic properties of the Hecke algebra.

\begin{figure}[h!]
\centering
\includegraphics[width=1\textwidth]{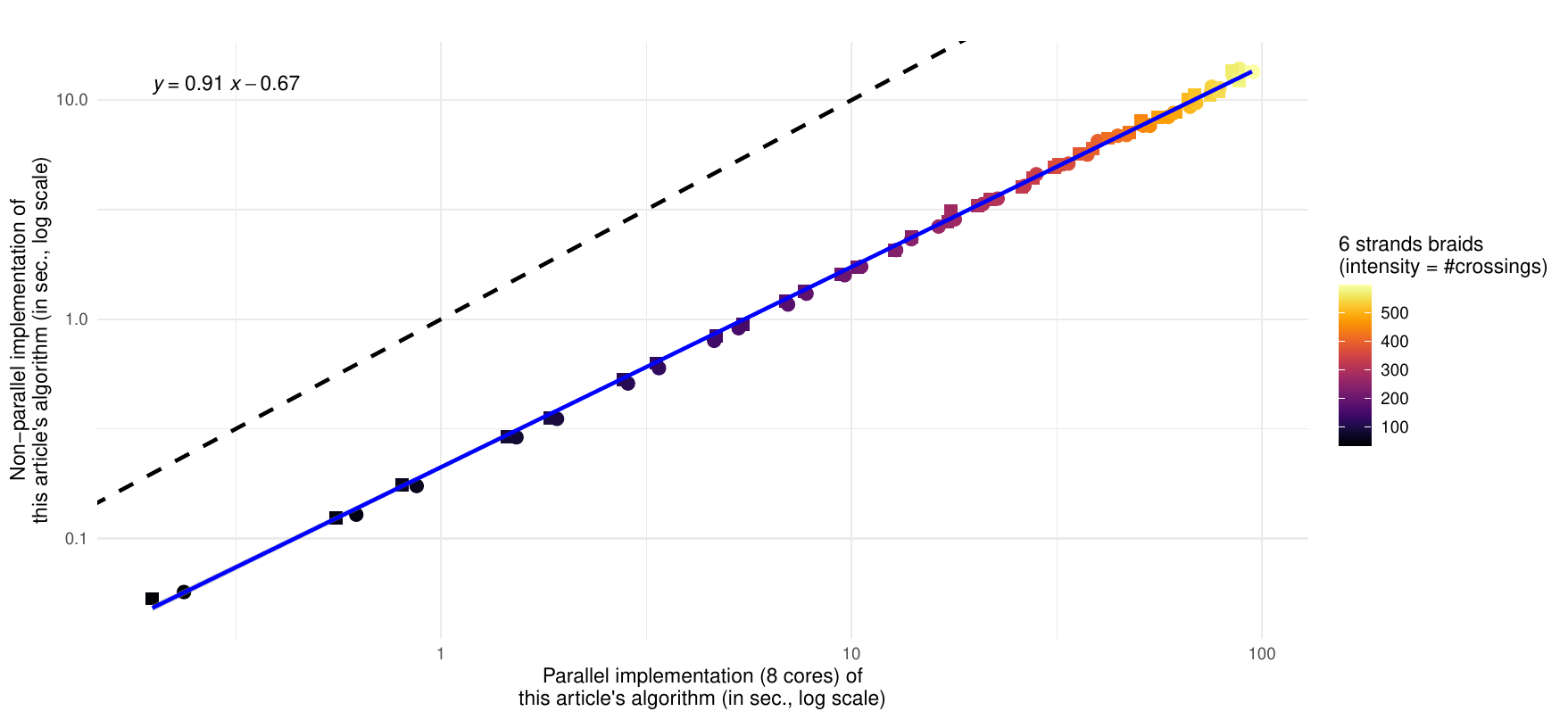}
\bigskip
\includegraphics[width=1\textwidth]{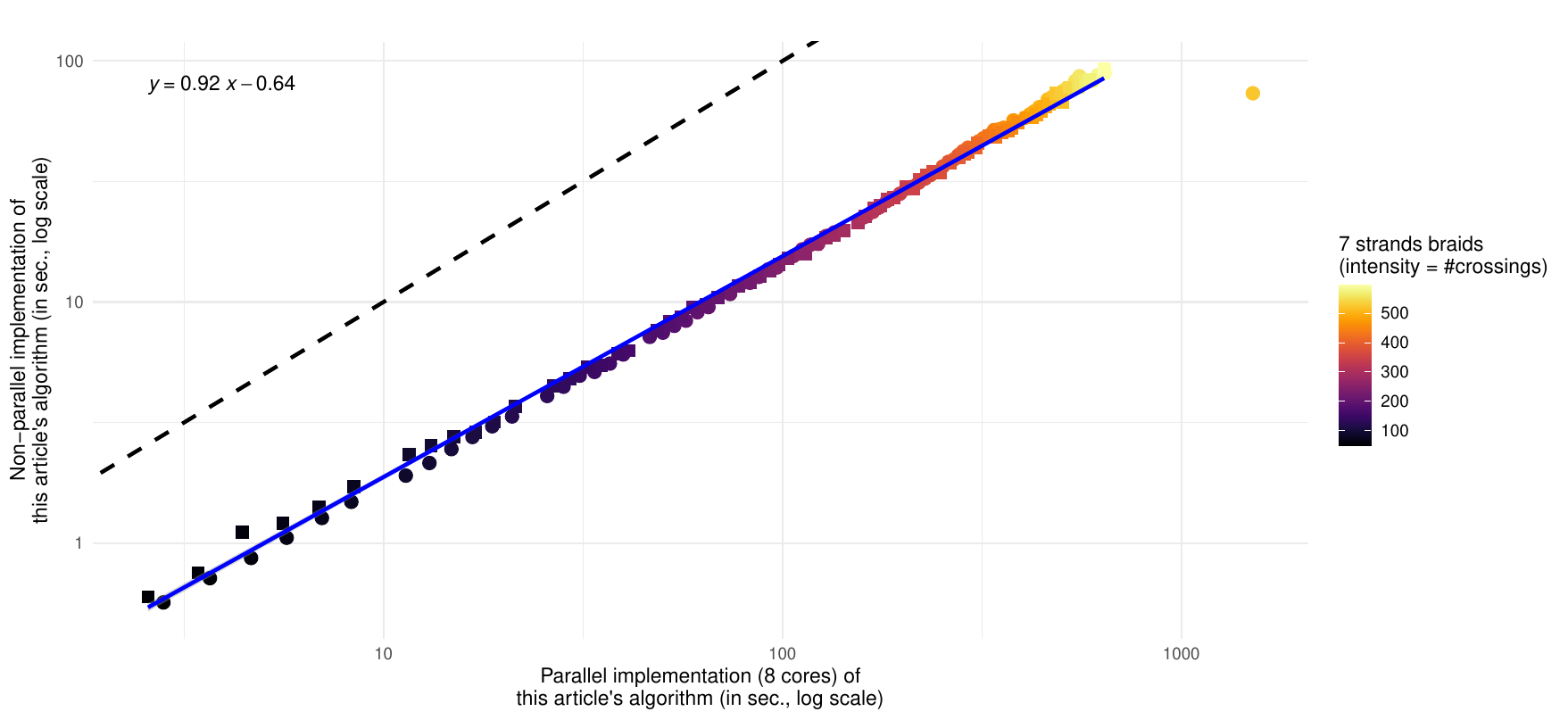}
\bigskip
\includegraphics[width=1\textwidth]{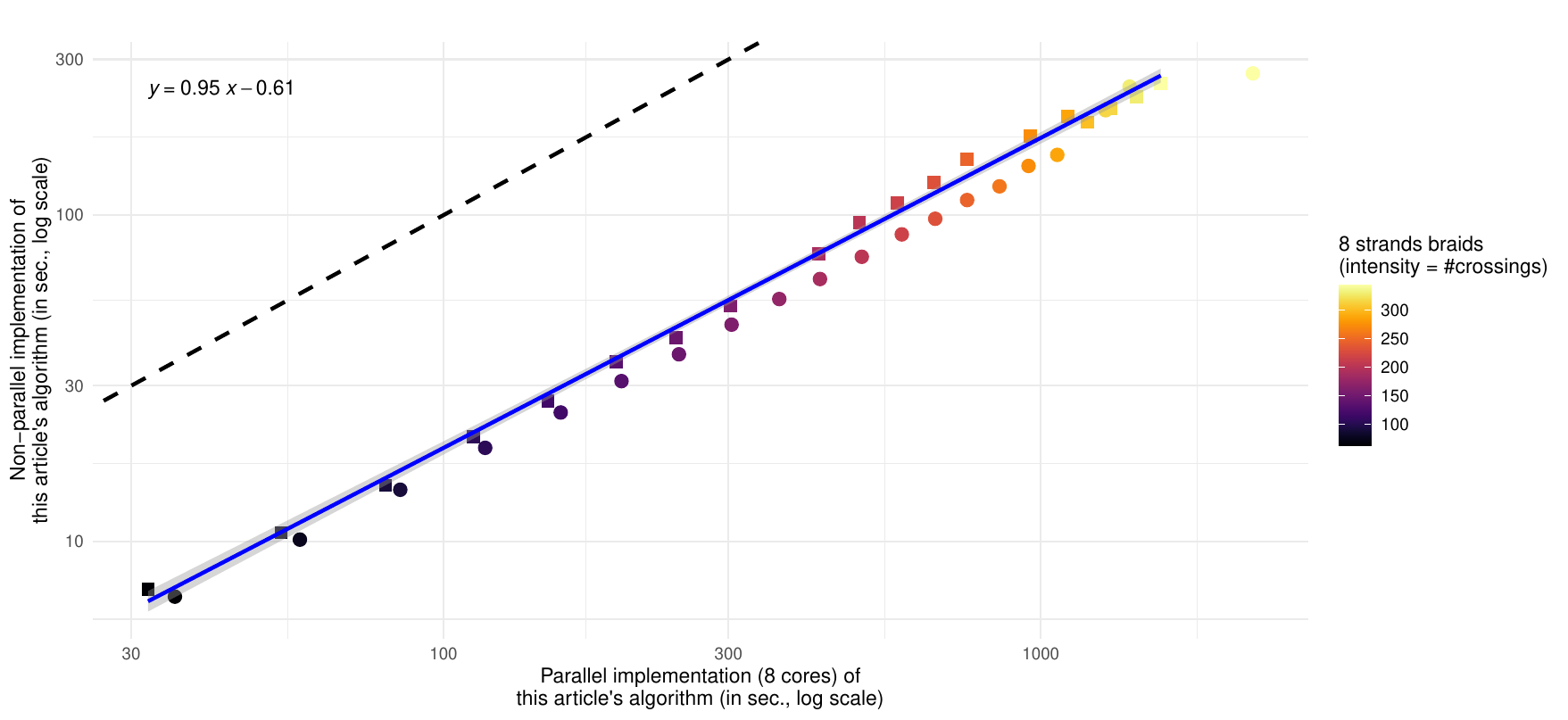}
\caption{Timings of our algorithm with parallelism against our algorithm without parallelism. The interpolation line are near $x=y/7.14$ (we use 8 cores).}
\label{fig:parallel}
\end{figure}

\section{Proving the non-faithfulness of the Hecke representation in $\Z/m\Z$}
\label{sec:non-faithfulness}

With a bucket search procedure, coupled with the fast algorithm described in Section~\ref{sec:comp_hecke}, we have been looking for non-trivial braids with trivial Hecke representation in several contexts.

Following~\cite{Bigelow_Burau,GWY,BQ}, we generate a sample of non-trivial braids from $B_n$ in Garside normal form, and store them in {\em buckets} of bounded size, one for each represented value of {\em augmented projlength} (see below). At iteration $k$ of the search, the buckets store braids of Garside length $k$. For initialization, the Garside length $1$ braids are in bijection with permutations in $\Perm_n \setminus \{ \id, \Delta\}$, and we store them in the buckets. Inductively, if we have generated a sample of Garside length $k$ braids, we attempt to extend each of them into Garside length $k+1$ braids by appending an extra compatible Garside letter. Note that, when attempting to insert a new generated braid, we use a {\em reservoir sampling}~\cite{10.1145/3147.3165} approach to guarantee that we extract an iid sample. In Section~\ref{sec:app_gar}, we describe details on the optimized extension of Garside normal forms we have implemented. In particular, we prove,

\begin{lemma}
Given a braid $b=b_k \ldots b_1 \in B_n$ in Garside normal form, there is an $O(n \log n)$ algorithm to pick a Garside letter $b_{k+1}$ \emph{iid} among all Garside letters compatible with $b_k$.
\end{lemma}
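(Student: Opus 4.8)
The plan is to convert the combinatorial normal-form condition of Lemma~\ref{lem:NF_cond_back} into an explicit description of the set of admissible letters, and then to show that a uniform element of this set is produced by drawing a single random permutation, all the computation being carried inside $\symG_n$ via the identification of $[1,\Delta]$ with the symmetric group.

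First I would translate the condition into descent sets. Appending $b_{k+1}$ on the left of the normal form $b_k\cdots b_1$ keeps it in normal form exactly when the new instance of Lemma~\ref{lem:NF_cond_back} holds, namely $\sigma_j\leq_R b_{k+1}\Rightarrow \sigma_j\leq_L b_k$ for all $j$. Writing $w\in\symG_n$ for the permutation of $b_{k+1}$ and $u$ for that of $b_k$, the relations $\sigma_j\leq_R b_{k+1}$ and $\sigma_j\leq_L b_k$ read as the right descent $w(j)>w(j+1)$ and the left descent $u^{-1}(j)>u^{-1}(j+1)$ respectively. Hence the admissible letters are precisely the permutations whose right descent set is contained in the fixed set $D:=\{j:u^{-1}(j)>u^{-1}(j+1)\}$, the left descent set of $b_k$. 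Recovering the one-line word of $b_k$ from its index costs $O(n\log n)$ by Lemma~\ref{lem:basic_perm}, after which $D$ is computed in $O(n)$; since $b_k$ is a proper nontrivial factor, $D\neq\emptyset$, and since $b_k\neq\Delta$ we have $D\neq\{1,\dots,n-1\}$, so the excluded letter $\Delta$ (whose right descent set is all of $\{1,\dots,n-1\}$) is automatically never admissible.

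Next I would identify these admissible permutations concretely and sample among them. Cutting the positions $\{1,\dots,n\}$ at each index of $D$ yields consecutive blocks of sizes $c_1,\dots,c_m$ with $\sum_i c_i=n$, and a permutation has right descent set contained in $D$ iff it is increasing on each block; such a permutation is determined by the collection of values placed in each block, so the admissible letters are in bijection with the ordered set partitions of $\{1,\dots,n\}$ into parts of sizes $c_1,\dots,c_m$, of which there are $\binom{n}{c_1,\dots,c_m}$. The key point for sampling is that a uniformly random $\pi\in\symG_n$, drawn by Fisher--Yates in $O(n)$, yields a uniform admissible letter: place $\pi(1),\dots,\pi(c_1)$ in the first block, the next $c_2$ values in the second, and so on, sorting each block increasingly. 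Each admissible $w$ is the image of exactly $c_1!\cdots c_m!$ permutations $\pi$, so it is produced with probability $c_1!\cdots c_m!/n!=1/\binom{n}{c_1,\dots,c_m}$, i.e. uniformly. Sorting the blocks costs $O(n\log n)$ and converting the resulting one-line word back to an index is again $O(n\log n)$ (Lemma~\ref{lem:basic_perm}), giving the claimed running time. Finally, the identity $\id$ is always admissible (its right descent set is empty) and must be excluded; I would simply reject and redraw, which adds only an expected $O(1)$ factor since $D\neq\emptyset$ forces $m\geq 2$ and hence $\binom{n}{c_1,\dots,c_m}\geq 2$.

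The only genuinely delicate point is the uniformity claim, so the main work is to verify carefully the bijection between admissible letters and ordered set partitions together with the exact fiber count $c_1!\cdots c_m!$. Everything else—the descent-set translation of Lemma~\ref{lem:NF_cond_back} and the $O(n\log n)$ index conversions—is routine given the encoding results of Section~\ref{sec:perm}. I expect the careful bookkeeping relating descents to the $\leq_L,\leq_R$ divisibilities, and the precise handling of the excluded letters $\id$ and $\Delta$, to be where attention is needed rather than any substantial obstacle.
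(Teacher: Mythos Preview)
Your proof is correct and follows the same route as the paper (Lemma~\ref{lem:NF_cond} and the discussion around Figure~\ref{fig:GarsideSampling}): translate the normal-form condition into containment of the right descent set of $b_{k+1}$ in the left descent set $D$ of $b_k$, partition $\{1,\dots,n\}$ into the maximal blocks on which the new letter must be increasing, and sample a uniform ordered set partition with those block sizes. The only difference is in how the sampling is implemented---the paper selects the value set for each block sequentially via reservoir sampling, whereas you draw one Fisher--Yates permutation and sort within blocks; your variant is slightly cleaner, handles singleton blocks uniformly with the rest, and gives the same $O(n\log n)$ bound (expected, because of the rejection of $\id$).
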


We define the {\em augmented projlength} of a positive braid $b$ to be
\[
  \widehat{\operatorname{projlength}}(b) \colon \sup_{k \leq 0} \{\operatorname{projlength}(\psi(\Delta^k b))\},
\]
and we try to minimize this quantity over the search. To avoid computing the $\operatorname{projlength}$ for infinitely many negative powers of $\Delta$, we (heuristically) stop the computation of $\widehat{\operatorname{projlength}}(b)$ when we encounter a power $k_0 < 0$ for which $\psi(\Delta^k b)$ has a coordinate with negative degree.

The bucket search algorithm, if successful, finds positive braids $b$ (not of the form $\Delta^l$) such that there exists of negative power $k$ of $\Delta$ satisfying: $\Delta^k \cdot b$ has $\projlength = 0$. In all cases where we found such braids, they had an image in the Hecke algebra (in the basis used from Lemma~\ref{lem:HeckeBasis}) with a single non-trivial coordinate $\idx$ equal to $1$. This coordinate corresponding to the positive permutation $s^{-1}$, the braid $s^{-1} \cdot \Delta^k \cdot b$ has trivial Hecke representation.

We have run our searches in parallel on a cluster of 2 CPUs, 128 cores/CPU, and 1024GB RAM.

\subsection{Non-faithful Hecke representations of $B_4$}

The decomposition of $\He_4$ into irreducible $B_4$ representations only contains the Burau representation as possibly faithful summand. The latter being not faithful at $p=2,3,5$ implies the same for the Hecke representation. We indeed found such counterexamples for $p=2$ and $3$, as well as with $p=4$, that we present in Appendix~\ref{app:resultsB4}.

\subsection{Non-faithful Hecke representations of $B_5$}

The same search performed on $B_5$ with base ring $\Z/2\Z$ also yielded a counterexample to faithfulness. To the best of our knowledge, this is a genuinely new and intriguing result.

\begin{theorem} \label{thm:kernelB5}
The braid $s_9^{-1} \Delta^{-9} b$ is in the kernel of the Hecke representation of $\Br_5$ with coefficients in $\Z/2\Z[q^{\pm 1}]$, with :

\begin{align*}
  b = &\sigma_{3} \sigma_{4} \sigma_{2} \sigma_{3} \sigma_{1} \sigma_{2} \sigma_{2} \sigma_{4} \sigma_{2} \sigma_{3} \sigma_{1} \sigma_{2} \sigma_{4} \sigma_{2} \sigma_{3} \sigma_{3} \sigma_{2} \sigma_{3} \sigma_{1} \sigma_{3}
 \sigma_{1} \sigma_{2} \sigma_{2} \sigma_{3} \sigma_{1} \sigma_{3} \sigma_{1} \sigma_{2} \sigma_{2} \sigma_{3} \sigma_{1} \sigma_{3} \sigma_{1} \sigma_{2} \sigma_{4} \sigma_{3}
\\
&   \sigma_{2}\sigma_{3} \sigma_{4} \sigma_{1} \sigma_{4} \sigma_{3} \sigma_{2} \sigma_{1} \sigma_{2} \sigma_{3} \sigma_{4} \sigma_{4} \sigma_{3} \sigma_{2} \sigma_{3} \sigma_{1} \sigma_{2} \sigma_{3} \sigma_{3} \sigma_{2} \sigma_{3} \sigma_{4} \sigma_{1} \sigma_{2} \sigma_{3} \sigma_{4} \sigma_{4} \sigma_{3} \sigma_{4} \sigma_{2} \sigma_{1} \sigma_{2} \sigma_{3} \sigma_{3} \sigma_{4} \sigma_{2}
\\
&\sigma_{1} \sigma_{2} \sigma_{3} \sigma_{4} \sigma_{4} \sigma_{3} \sigma_{4} \sigma_{1} \sigma_{2} \sigma_{3} \sigma_{3} \sigma_{4} \sigma_{2} \sigma_{3} \sigma_{4} \sigma_{1} \sigma_{4} \sigma_{3} \sigma_{4} \sigma_{1} \sigma_{2}
\\
  s_9^{-1} =& \sigma^{-1}_{2} \sigma^{-1}_{3} \sigma^{-1}_{4},\qquad
  \Delta^{-1} = \sigma^{-1}_{4} \sigma^{-1}_{3} \sigma^{-1}_{2} \sigma^{-1}_{1} \sigma^{-1}_{4} \sigma^{-1}_{3} \sigma^{-1}_{2} \sigma^{-1}_{4} \sigma^{-1}_{3} \sigma^{-1}_{4}.
\end{align*}
$b$ is of Garside length $19$, and the resulting counterexample is a braid with $186$ crossings.
\end{theorem}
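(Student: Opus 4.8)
The statement asserts that a specific element lies in a representation's kernel, so the plan is to reduce it to a finite, exact computation and to run the algorithm of Theorem~\ref{thm:complexity_hecke} over the base ring $\Z/2\Z[q^{\pm 1}]$. Since $\psi$ is a group homomorphism, it suffices to evaluate $\psi(s_9^{-1}\Delta^{-9}b)$ and check that it equals the identity $T_{\id}=1$. I would first expand the braid $\Delta^{-9}b$ into its $183$ Artin generators, initialize the coordinate array $\hecke$ of length $5!=120$, and apply the inductive step of Algorithm~\ref{algo:induction} once per crossing: Lemma~\ref{lem:prod_hecke} governs each positive generator and Lemma~\ref{lem:inv} each negative one, with every arithmetic operation carried out in $\Z/2\Z[q^{\pm 1}]$ so that the computation is exact and free of rounding. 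Matching the output of the bucket search, the expected intermediate result is $\psi(\Delta^{-9}b)=T_w$, where $w$ is the permutation of index $9$ in the factorial encoding of Section~\ref{sec:perm}; equivalently, $\hecke$ has a single nonzero entry, equal to $1$, at position $9$.

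Granting this, the conclusion is immediate. The permutation $w$ has positive Artin lift the braid $s_9=\sigma_4\sigma_3\sigma_2$, and $s_9^{-1}=\sigma_2^{-1}\sigma_3^{-1}\sigma_4^{-1}$ is exactly the factor prepended in the statement. Since $\sigma_4\sigma_3\sigma_2$ is a reduced word, Lemma~\ref{lem:HeckeBasis} gives $\psi(s_9)=T_{s_4}T_{s_3}T_{s_2}=T_w$, hence $\psi(s_9^{-1})=T_w^{-1}$, and therefore
\[
\psi(s_9^{-1}\Delta^{-9}b)=\psi(s_9^{-1})\,\psi(\Delta^{-9}b)=T_w^{-1}\,T_w=1=T_{\id},
\]
so the braid lies in the kernel. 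Equivalently, one simply continues the single pass of Algorithm~\ref{algo:induction} through the three extra negative generators of $s_9^{-1}$ and reads off that $\hecke$ collapses to the entry $1$ at index $0$ and $0$ elsewhere; the correctness established in Section~\ref{subsec:correctness} guarantees that this certifies $\psi(s_9^{-1}\Delta^{-9}b)=1$.

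For this to witness non-faithfulness, it remains to certify that $s_9^{-1}\Delta^{-9}b$ is a non-trivial element of $\Br_5$. The abelianization does not help here: the exponent sum is $93-90-3=0$, and the mod-$2$ Hecke image is trivial by construction, so a finer invariant is needed. The most economical certificate is to rerun the very same algorithm over the integral ring $\Z[q^{\pm 1}]$: every defining relation of $\He_n$ holds over $\Z[q^{\pm 1}]$ and the trivial braid maps to $T_{\id}$, so obtaining any integral image other than $T_{\id}$ simultaneously proves non-triviality and shows the element lies in the mod-$2$ kernel but not the integral one. As alternatives, one may compute the left Garside normal form of $s_9^{-1}\Delta^{-9}b$ with the machinery behind Lemma~\ref{lem:NF_cond_back} and observe it is non-empty, or evaluate a provably faithful representation such as Lawrence--Krammer--Bigelow.

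The proof is thus a verification rather than a conceptual argument, and the main obstacle is precisely that: the triviality of the mod-$2$ image is a mechanical consequence of Lemmas~\ref{lem:prod_hecke} and~\ref{lem:inv} applied along the $186$-crossing word, and its reliability rests on the correctness of Algorithm~\ref{algo:induction} together with exact modular arithmetic. The delicate points are the faithful transcription of the long braid word and the independent non-triviality certificate; the integral recomputation handles the latter at essentially no extra cost, since it reuses the same single-pass procedure.
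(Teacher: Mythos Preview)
Your proposal is correct and is essentially the paper's approach: the theorem is presented there as the output of the search without a separate proof, and the verification you outline---running Algorithm~\ref{algo:induction} over $\Z/2\Z[q^{\pm 1}]$ on the $186$-crossing word and observing the image collapse to $T_{\id}$---is exactly how the claim is certified. The paper leaves non-triviality implicit in the Garside construction (the positive braid $b$ has canonical length $19$, whereas $\Delta^{9}s_9$ has canonical length $10$), and your alternatives of recomputing over $\Z[q^{\pm 1}]$ or invoking the Garside normal form are equally valid certificates.
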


\subsection{Garside automaton} \label{sec:app_gar}

To generate braids, we use the following finite state automaton.

\begin{definition}
Let $\mathrm{\bf Garside}$ be the finite state automaton with:
\begin{itemize}
\item states, permutations $s$ in $\Perm_n$ (or equivalently elements in $[1,\Delta]$);
\item $s'$-labeled arrows $s\rightarrow s'$ if $\forall s_j\leq_R s'$ then $s_j\leq_L s$.
\end{itemize}
\end{definition}

We create increasingly complicated braids by applying to a Garside normal form $b_k\cdots b_1$ any (possibly randomly chosen) letter from $\mathrm{\bf Garside}(b_k)$. The construction of the automaton can be made very efficient thanks to the following observation.

\begin{lemma} \label{lem:NF_cond}
  Given $s=(s(1),\cdots, s(n))$ and $s'=(s'(1),\cdots,s'(n))$, $s'$ can be applied at the state $s$ if and only if the following property holds:
\[
s'\neq \mathrm{id}\;\text{and}\; \forall i\in \llbracket 1,n \rrbracket \;\text{with}\; s'(i)>s'(i+1),\; s^{-1}(i)>s^{-1}(i+1).
\]
\end{lemma}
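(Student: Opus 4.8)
The statement to prove is Lemma~\ref{lem:NF_cond}, which reformulates the automaton condition from Lemma~\ref{lem:NF_cond_back} (namely $\sigma_j \leq_R b_i \Rightarrow \sigma_j \leq_L b_{i-1}$, here with $b_i = s'$ and $b_{i-1} = s$) as an explicit, checkable condition on the one-line words of the permutations $s$ and $s'$. Let me think through how the two sides match up.

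$\newline$

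First I would translate each divisibility relation into a combinatorial statement about descents. For a permutation $s'$ viewed as a positive braid in $[1,\Delta]$, the condition $\sigma_j \leq_R s'$ (right-divisibility by the Artin generator $\sigma_j$) should correspond exactly to $s'$ having a descent at position $j$, i.e.\ $s'(j) > s'(j+1)$: right-multiplying a reduced word by $s_j$ and staying reduced is possible precisely when swapping positions $j,j+1$ decreases the length, which is the descent condition on $s'$. Symmetrically, $\sigma_j \leq_L s$ (left-divisibility) corresponds to $s$ having an inversion in the \emph{values} $j, j+1$, i.e.\ the value $j{+}1$ appearing before the value $j$, which is the statement $s^{-1}(j) > s^{-1}(j+1)$. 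I would make these two equivalences precise first, since they are the crux; both follow from the standard characterization of (weak) Bruhat divisibility in terms of left and right descent sets, lifted to the Garside monoid via the length-preserving section $\symG_n \to [1,\Delta]$.

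$\newline$

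Once these dictionary entries are in place, the lemma becomes almost immediate. The hypothesis of Lemma~\ref{lem:NF_cond_back}, $\forall j,\ \sigma_j \leq_R s' \Rightarrow \sigma_j \leq_L s$, rewrites term-by-term as: for every $j$ with $s'(j) > s'(j+1)$, one has $s^{-1}(j) > s^{-1}(j+1)$, which is exactly the displayed condition. The extra clause $s' \neq \mathrm{id}$ is needed because a Garside normal form requires each factor $b_i$ to be a nontrivial element of $[1,\Delta]$ (the empty permutation is disallowed as a genuine Garside letter), so I would note it is an external well-formedness constraint rather than something arising from the divisibility implication itself.

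$\newline$

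The main obstacle I anticipate is the careful bookkeeping in the first step: making sure the left/right divisibility conventions are matched to descents in \emph{positions} versus descents in \emph{values} without an off-by-one or a transposition of the two roles. Concretely, I must verify that it is $s'$ (the \emph{newer}, higher-index factor) whose positional descents are read off, while it is $s$ (the \emph{older} factor) whose value-inversions are tested, and not the other way around; reversing these would produce a superficially similar but incorrect criterion. I would pin this down by working through the smallest nontrivial case in $\symG_3$ or $\symG_4$ to confirm orientation before asserting the general equivalence.
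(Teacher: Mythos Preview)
Your proposal is correct and follows exactly the same approach as the paper: the paper's proof is the single sentence ``This restates the fact that right descents from $s'$ need to be left descents of $s$,'' and your plan is precisely an unpacking of that sentence via the standard dictionary $\sigma_j\leq_R s' \Leftrightarrow s'(j)>s'(j+1)$ and $\sigma_j\leq_L s \Leftrightarrow s^{-1}(j)>s^{-1}(j+1)$. Your attention to the left/right (position vs.\ value) bookkeeping and the $s'\neq\mathrm{id}$ clause is appropriate and does not diverge from the paper's intent.
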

\begin{proof}
  This restates the fact that right descents from $s'$ need to be left descents of $s$.
\end{proof}

From a state $s$, we want to generate admissible arrows $s'$. All pairs $\{(i_r,i_r+1)\;|\;\exists l<k,\; i_r=s(k),\; i_r+1=s(l)\}$ partition $\llbracket 1,n\rrbracket$ into intervals $\llbracket i_r+1, i_{r+1}\rrbracket$. Lemma~\ref{lem:NF_cond} prevents two entries from the same interval to cross in $s'$. This suggests the following (see Figure~\ref{fig:GarsideSampling}):
\begin{itemize}
\item for length 1 intervals ($i_r+1=i_{r+1}$), freely choose $s'(i_r)$ in $\llbracket 1, n\rrbracket$ (forming $S_{free}$);
\item images of the other intervals will be disjoint ordered subsets of $\llbracket 1,n\rrbracket\setminus S_{free}$. These choices can be made one interval at a time in any order. Starting from the first interval $\llbracket i_r+1,i_{r+1}\rrbracket$ of length $p'>1$, one chooses a random ordered subset $S_{p'}$ of $p'$ elements in $\llbracket 1, n-p\rrbracket$, and then decide that $s'(i_r+k)$ is the $S_{p'}(k)$-th element in $\llbracket 1,n\rrbracket\setminus S_{free}$. Then one replaces $S_{free}$ by $S_{free}\cup S_{p'}$ and goes to the next interval of size greater than $1$.
\end{itemize}

\begin{figure}[!h]
\[
\begin{tikzpicture}[anchorbase,scale=.5]
\node (1) at (1,0) {};
\node (1p) at (1,3) {};
\node (2) at (2,0) {};
\node (2p) at (2,3) {};
\node (3) at (3,0) {};
\node (3p) at (3,3) {};
\node (4) at (4,0) {};
\node (4p) at (4,3) {};
\node (5) at (5,0) {};
\node (5p) at (5,3) {};
\node (6) at (6,0) {};
\node (6p) at (6,3) {};
\node (7) at (7,0) {};
\node (7p) at (7,3) {};
\node (8) at (8,0) {};
\node (8p) at (8,3) {};
\draw (1) -- (3p.center);
\draw (2) -- (1p.center);
\draw (3) -- (4p.center);
\draw (4) -- (2p.center);
\draw (5) -- (7p.center);
\draw (6) -- (6p.center);
\draw (7) -- (5p.center);
\draw (8) -- (8p.center);
\node at (1,-.2) {\tiny 1};
\node at (2,-.2) {\tiny 2};
\node at (3,-.2) {\tiny 3};
\node at (4,-.2) {\tiny 4};
\node at (5,-.2) {\tiny 5};
\node at (6,-.2) {\tiny 6};
\node at (7,-.2) {\tiny 7};
\node at (8,-.2) {\tiny 8};
\draw (1.5,3) ellipse (.9 and .1); 
\draw (4,3) ellipse (1.4 and .1); 
\draw (6,3) ellipse (.2 and .1); 
\draw (7.5,3) ellipse (.7 and .1);
\node (1pp) at (1,6) {}; 
\draw [decoration={brace,raise=8pt},decorate] (1,0) -- node[left=11pt] {$s$} (1,3);
\node (legend) at (4.5,-2) {\small Start from $s$.};
\node (legend) at (4.5,8) {\vphantom{\small Start from $s$.}};
\end{tikzpicture}
\;\longrightarrow\;
\begin{tikzpicture}[anchorbase,scale=.5]
\node (1) at (1,0) {};
\node (1p) at (1,3) {};
\node (2) at (2,0) {};
\node (2p) at (2,3) {};
\node (3) at (3,0) {};
\node (3p) at (3,3) {};
\node (4) at (4,0) {};
\node (4p) at (4,3) {};
\node (5) at (5,0) {};
\node (5p) at (5,3) {};
\node (6) at (6,0) {};
\node (6p) at (6,3) {};
\node (7) at (7,0) {};
\node (7p) at (7,3) {};
\node (8) at (8,0) {};
\node (8p) at (8,3) {};
\draw (1) -- (3p.center);
\draw (2) -- (1p.center);
\draw (3) -- (4p.center);
\draw (4) -- (2p.center);
\draw (5) -- (7p.center);
\draw (6) -- (6p.center);
\draw (7) -- (5p.center);
\draw (8) -- (8p.center);
\node at (1,-.2) {\tiny 1};
\node at (2,-.2) {\tiny 2};
\node at (3,-.2) {\tiny 3};
\node at (4,-.2) {\tiny 4};
\node at (5,-.2) {\tiny 5};
\node at (6,-.2) {\tiny 6};
\node at (7,-.2) {\tiny 7};
\node at (8,-.2) {\tiny 8};
\draw (1.5,3) ellipse (.9 and .1); 
\draw (4,3) ellipse (1.4 and .1); 
\draw (6,3) ellipse (.2 and .1); 
\draw (7.5,3) ellipse (.7 and .1);
\node (1pp) at (1,6) {};
\node (2pp) at (2,6) {};
\node (3pp) at (3,6) {};
\node (4pp) at (4,6) {};
\node (5pp) at (5,6) {};
\node (6pp) at (6,6) {};
\node (7pp) at (7,6) {};
\node (8pp) at (8,6) {};
\draw (6p.center) -- (3pp);
\node (legend) at (4.5,-2) {\small Choose $S_{free}$.};
\node (legend) at (4.5,8) {\vphantom{\small Choose $S_{free}$.}};
\end{tikzpicture}
\;\longrightarrow\;
\begin{tikzpicture}[anchorbase,scale=.5]
\node (1) at (1,0) {};
\node (1p) at (1,3) {};
\node (2) at (2,0) {};
\node (2p) at (2,3) {};
\node (3) at (3,0) {};
\node (3p) at (3,3) {};
\node (4) at (4,0) {};
\node (4p) at (4,3) {};
\node (5) at (5,0) {};
\node (5p) at (5,3) {};
\node (6) at (6,0) {};
\node (6p) at (6,3) {};
\node (7) at (7,0) {};
\node (7p) at (7,3) {};
\node (8) at (8,0) {};
\node (8p) at (8,3) {};
\draw (1) -- (3p.center);
\draw (2) -- (1p.center);
\draw (3) -- (4p.center);
\draw (4) -- (2p.center);
\draw (5) -- (7p.center);
\draw (6) -- (6p.center);
\draw (7) -- (5p.center);
\draw (8) -- (8p.center);
\node at (1,-.2) {\tiny 1};
\node at (2,-.2) {\tiny 2};
\node at (3,-.2) {\tiny 3};
\node at (4,-.2) {\tiny 4};
\node at (5,-.2) {\tiny 5};
\node at (6,-.2) {\tiny 6};
\node at (7,-.2) {\tiny 7};
\node at (8,-.2) {\tiny 8};
\draw (1.5,3) ellipse (.9 and .1); 
\draw (4,3) ellipse (1.4 and .1); 
\draw (6,3) ellipse (.2 and .1); 
\draw (7.5,3) ellipse (.7 and .1);
\node (1pp) at (1,6) {};
\node (2pp) at (2,6) {};
\node (3pp) at (3,6) {};
\node (4pp) at (4,6) {};
\node (5pp) at (5,6) {};
\node (6pp) at (6,6) {};
\node (7pp) at (7,6) {};
\node (8pp) at (8,6) {};
\draw (6p.center) -- (3pp);
\draw (1p.center) -- (2pp);
\draw (2p.center) -- (5pp);
\draw (3p.center) -- (1pp);
\draw (4p.center) -- (4pp);
\draw (5p.center) -- (8pp);
\draw (7p.center) -- (6pp);
\draw (8p.center) -- (7pp);
\draw [decoration={brace,mirror,raise=8pt},decorate] (8,3) -- node[right=11pt] {$s'$} (8,6);
\node (legend) at (4.5,-2) {\small \begin{minipage}{4cm}Choose the other images so that no strands coming out of the same ellipse cross.\end{minipage}};
\node  at (4.5,8) {\vphantom{\small \begin{minipage}{4cm}Choose the other images so that no strands coming out of the same ellipse cross.\end{minipage}}};
\end{tikzpicture}
\]
\caption{Generating descents}
\label{fig:GarsideSampling}
\end{figure}

A major advantage of the above construction is the following.

\begin{lemma}
Given a braid $b \in B_n$ in Garside normal form $b_k \ldots b_1$, there is an $O(n \log n)$ algorithm to pick a Garside letter $b_{k+1}$ iid among all Garside letters compatible with $b_k$.
\end{lemma}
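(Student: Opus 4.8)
The plan is to establish the $O(n\log n)$ complexity by exhibiting an explicit sampling procedure that realizes the construction sketched just before the statement (and illustrated in \reffig{GarsideSampling}), and then to argue that (a) it only ever produces compatible Garside letters, (b) it reaches every compatible letter, and (c) it does so with the uniform distribution, all within the claimed time bound. The characterization I would rely on throughout is \reflem{NF_cond}: $s'$ is compatible with $b_k = s$ precisely when $s' \neq \id$ and every right descent of $s'$ is a left descent of $s$, equivalently every descent position $i$ of $s'$ (where $s'(i) > s'(i+1)$) satisfies $s^{-1}(i) > s^{-1}(i+1)$.

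First I would translate the descent condition into the interval partition used in the construction. The positions $i$ at which $s'$ is \emph{allowed} to descend are exactly those where $s^{-1}$ descends; the complementary positions force $s'(i) < s'(i+1)$. Grouping maximal runs of forced-ascent positions yields a partition of $\llbracket 1,n\rrbracket$ into intervals on which $s'$ must be strictly increasing, and across interval boundaries $s'$ is free to ascend or descend. Thus specifying a compatible $s'$ is equivalent to choosing, for each interval $I$, an ordered value-set $s'(I)$, with the one constraint that these value-sets partition $\llbracket 1,n\rrbracket$ and the values within each interval appear in increasing order; the order of the values within an interval is forced, so the only genuine freedom is the \emph{set partition} of values among the intervals. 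Length-one intervals impose no internal constraint, which is why they can receive their single image freely; longer intervals must receive an increasingly-ordered subset.

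Next I would describe the sampling itself as processing the intervals one at a time, at each step drawing a uniformly random subset of the remaining unused values of the correct cardinality and assigning it in increasing order to the current interval. To see this gives the uniform distribution over compatible $s'$, I would observe the bijection between compatible letters and the sequences of disjoint value-sets described above: since within each interval the assignment is forced to be increasing, distinct sequences of choices give distinct permutations, and every compatible $s'$ arises from exactly one such sequence. A standard induction then shows that drawing each successive subset uniformly among the available values produces each full assignment with equal probability, hence each compatible $s'$ uniformly; I would take care to exclude the single forbidden outcome $s' = \id$ (which occurs when every interval takes its values in the order-preserving way), either by rejection of that one case or by sampling on the set minus that point. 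For the running time, each interval's subset can be selected by a reservoir-type or Fisher--Yates-style draw in time linear in $n$, the descent positions of $s^{-1}$ are read off in $O(n)$, and converting the resulting one-line word to its index costs $O(n\log n)$ by \reflem{basic_perm}(ii); summing gives $O(n\log n)$.

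The main obstacle I anticipate is the distributional (uniformity) argument rather than the complexity bound: one must verify that the greedy, interval-by-interval subset selection does not bias toward certain permutations, which hinges on the fact that the within-interval order is completely forced so that the map from (sequences of subset choices) to (compatible permutations) is a genuine bijection and not merely a surjection. Once that bijection is pinned down, uniformity follows from the elementary fact that a sequence of conditionally-uniform choices over a product of finite sets is uniform on the product; the only subtlety is the clean handling of the excluded identity letter so that the sample is genuinely iid over the compatible Garside letters and not over the slightly larger set that includes $\id$.
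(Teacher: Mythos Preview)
Your proposal is correct and follows essentially the same approach as the paper: reduce compatibility via \reflem{NF_cond} to the interval partition induced by the left descents of $s$, sample the image of each interval as a uniformly random subset of the remaining values (assigned in increasing order), and appeal to reservoir sampling for the $O(n\log n)$ bound. Your write-up is in fact more detailed than the paper's---you make the bijection between compatible letters and ordered set partitions explicit and you flag the need to exclude $s'=\id$, a point the paper's proof passes over in silence.
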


\begin{proof}
If one can make uniform random choices of $p$ unordered elements in $\llbracket 1,n\rrbracket$, and uniform random choices of $p$ ordered elements in $\llbracket 1, p'\rrbracket$, then the procedure described above the lemma produces uniform choices in $\mathrm{\bf Garside}(s)$. Picking iid random subsets can be done in time $O(n \log n)$ using, for example, a reservoir sampling algorithm.
\end{proof}

\bibliography{biblio.bib}

\appendix
\section{Hecke algebra and knot invariants} \label{app:homflypthecke}

The Hecke algebra is a central object in quantum algebra and representation theory, arising as the endomorphism algebra of $U_q(\mathfrak{gl}_N)$ intertwiners of the $n$-fold tensor product of the vector representation of $U_q(\mathfrak{gl}_N)$, for $N\geq n$. It thus gives a presentation by generators and relations of the endomorphism algebra. By definition, it admits a basis indexed by elements of the symmetric group, but it might be worth mentioning that Kazhdan and Lusztig introduced in the late 70's a different basis~\cite{KL} and proved it to have positive structure constants for the multiplication.

The Hecke algebra arises as a quotient of the braid group by relations that are universal for the Reshetikhin-Turaev knot invariants~\cite{RT}. This includes the Jones polynomial~\cite{Jones} and the Alexander polynomial~\cite{Alexander}, both of which generalize to the HOMFLY-PT polynomial~\cite{HOMFLY,PT}. This latter HOMFLY-PT polynomial can be defined via skein relations as follows:

\begin{definition} \label{def:HomflyPt}
  We define $P$ to be the link invariant that assigns to an oriented framed link an element of $\Z[q^{\pm 1},a^{\pm 1},\frac{a-a^{-1}}{q-q^{-1}}]$ so that:
  \begin{itemize}
  \item     $q^{-1}P\left(\begin{tikzpicture}[anchorbase,scale=.4] \draw [->] (0,0) -- (1,1); \draw (1,0) -- (.6,.4); \draw [->] (.4,.6) -- (0,1); \end{tikzpicture}\right)-qP\left(\begin{tikzpicture}[anchorbase,scale=.4] \draw [->] (1,0) -- (0,1); \draw (0,0) -- (.4,.4); \draw [->] (.6,.6) -- (1,1); \end{tikzpicture}\right)=(q-q^{-1})P\left(\begin{tikzpicture}[anchorbase,scale=.4] \draw [->] (0,0) to [out=60,in=-60] (0,1); \draw [->] (1,0) to [out=120,in=-120] (1,1); \end{tikzpicture}\right)$;
  \item $P\left(\begin{tikzpicture}[anchorbase,scale=.5] \draw (0,0) circle (.5) ; \end{tikzpicture}\right)=\frac{a-a^{-1}}{q-q^{-1}}$;
  \item $P\left(\begin{tikzpicture}[anchorbase,scale=.5] \draw (0,0) to [out=90,in=180] (.5,1.25) to [out=0,in=0] (.5,.75) to [out=180,in=-60] (.2,1); \draw [->] (.1,1.2) to [out=120,in=-90] (0,2);\end{tikzpicture}
  \right)=a^{-1}qP\left(\;
  \begin{tikzpicture}[anchorbase,scale=.5]
    \draw [->] (0,0) -- (0,2);
  \end{tikzpicture}
  \;\right)$.
  \end{itemize}
\end{definition}

Then it is easy to check that this invariant respects the defining relation of the Hecke algebra:
\[
\left(\begin{tikzpicture}[anchorbase,scale=.5] \draw [->] (0,0) -- (1,1); \draw (1,0) -- (.6,.4); \draw [->] (.4,.6) -- (0,1); \end{tikzpicture}\; -\; \begin{tikzpicture}[anchorbase,scale=.5] \draw [->] (0,0) -- (0,1); \draw [->] (1,0) -- (1,1);\end{tikzpicture}\right)\left(\begin{tikzpicture}[anchorbase,scale=.5] \draw [->] (0,0) -- (1,1); \draw (1,0) -- (.6,.4); \draw [->] (.4,.6) -- (0,1); \end{tikzpicture}\; +q^2\; \begin{tikzpicture}[anchorbase,scale=.5] \draw [->] (0,0) -- (0,1); \draw [->] (1,0) -- (1,1);\end{tikzpicture}\right)=0
\]

This explains the relevance of the Hecke algebra in computing knot invariants. 

The following lemma computes the value of stabilized unknots, and is useful for example for the proof of Lemma~\ref{lem:ValueAnnRed}.
\begin{lemma} \label{lem:s_unknot}
  The value of the HOMFLY-PT polynomial of the closure of a cycle (l,1,\dots, l-1) as they appear in Definition~\ref{def:annred} is as follows:
  \[
  P(\hat{\sigma_{l-1}\sigma_{l-2}\cdots \sigma_1})=a^{1-l}q^{l-1}\frac{a-a^{-1}}{q-q^{-1}}.
  \]
\end{lemma}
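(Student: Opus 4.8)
The plan is to compute the HOMFLY-PT polynomial of the closure of the braid $\sigma_{l-1}\sigma_{l-2}\cdots\sigma_1$ directly, by induction on $l$, using the skein-theoretic defining relations of Definition~\ref{def:HomflyPt}. The key geometric observation is that the closure $\widehat{\sigma_{l-1}\sigma_{l-2}\cdots\sigma_1}$ is an unknot for every $l$: the cycle $(l,1,\dots,l-1)$ acts as a single $l$-cycle on the strands, so the closure is connected and, being a closure of a positive braid that cyclically permutes all strands with a single ``cascade'' of crossings, it is isotopic to the standard circle. The nontrivial content of the lemma is therefore the \emph{framing} correction: the value is $\frac{a-a^{-1}}{q-q^{-1}}$ multiplied by a monomial $a^{1-l}q^{l-1}$ that records the writhe accumulated in the $l-1$ positive crossings.

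First I would set up the induction. The base case $l=1$ is the empty braid on one strand, whose closure is a single unknot with value $\frac{a-a^{-1}}{q-q^{-1}}$, matching $a^{0}q^{0}\frac{a-a^{-1}}{q-q^{-1}}$. For the inductive step, I would isolate the topmost crossing $\sigma_{l-1}$ in the word $\sigma_{l-1}\sigma_{l-2}\cdots\sigma_1$ and apply the first (skein) relation of Definition~\ref{def:HomflyPt} at that crossing. Resolving the positive crossing expresses $q^{-1}P(\widehat{\sigma_{l-1}\cdots\sigma_1})$ in terms of the oriented smoothing and the ``inverse'' crossing term. The oriented smoothing removes the crossing entirely, leaving $\widehat{\sigma_{l-2}\cdots\sigma_1}$ on the first $l-1$ strands together with a disjoint trivial strand; alternatively, after a Markov-type destabilization (which is exactly the content of the third, curl-removing relation, contributing the factor $a^{-1}q$), one reduces to the closure of $\sigma_{l-2}\cdots\sigma_1$ on fewer strands.

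The main step is to track the two normalization factors carefully: the $q^{\pm1}$ factors coming from the skein relation and the $a^{-1}q$ factor coming from the first-Reidemeister/curl relation that performs the destabilization. Concretely, I expect to show
\[
P(\widehat{\sigma_{l-1}\sigma_{l-2}\cdots\sigma_1}) = a^{-1}q\, P(\widehat{\sigma_{l-2}\cdots\sigma_1}),
\]
and then the induction hypothesis $P(\widehat{\sigma_{l-2}\cdots\sigma_1})=a^{2-l}q^{l-2}\frac{a-a^{-1}}{q-q^{-1}}$ immediately yields the claimed formula $a^{1-l}q^{l-1}\frac{a-a^{-1}}{q-q^{-1}}$. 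The one subtlety is ensuring that the smoothed term genuinely reduces to the closure on one fewer strand without leaving an extra free unknot component; this is where the specific cyclic structure $(l,1,\dots,l-1)$ is used, guaranteeing that the destabilizing strand is a single unknotted arc that may be absorbed by the curl relation.

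The hard part will be bookkeeping the orientations and the precise power of $a$ contributed at each destabilization, since the skein relation and the framing relation each introduce $q$-factors and it is easy to misattribute a sign in the exponent; I would verify the exponents against the base case and the first nontrivial case $l=2$ (the Hopf-link-to-unknot reduction of a single crossing) before trusting the general inductive bookkeeping.
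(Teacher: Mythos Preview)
Your proposal is correct and matches the paper's argument: both establish the recursion $P(\widehat{\sigma_{l-1}\cdots\sigma_1}) = a^{-1}q\,P(\widehat{\sigma_{l-2}\cdots\sigma_1})$ by induction on $l$ and then invoke the unknot value $\frac{a-a^{-1}}{q-q^{-1}}$. Note, however, that the first skein relation is an unnecessary detour---the paper (and your ``alternative'') obtains the recursion directly from the curl relation (third item of Definition~\ref{def:HomflyPt}), since the crossing $\sigma_{l-1}$ together with the closure arc of the $l$-th strand is exactly a positive kink.
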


\begin{proof}
  The third item from Definition~\ref{def:HomflyPt} implies that:
  \[
  P(\hat{\sigma_{l-1}\sigma_{l-2}\cdots \sigma_1})=(a^{-1}q)  P(\hat{\sigma_{l-2}\sigma_{l-2}\cdots \sigma_1})=\cdots =(a^{-1}q)^{l-1}P(\hat{\id_1}),
  \]
  where $\id_1$ stands for the identity braid on one strand. Then the second item implies that $P(\hat{\it_1})=\frac{a-a^{-1}}{q-q^{-1}}$, and the result follows.
\end{proof}

\section{Basic facts about permutations}
We sketch a proof of Lemma~\ref{lem:basic_perm} containing basic facts about encoding permutations. The ideas rely on standard algorithmic techniques.

\begin{proof}{Proof of Lemma~\ref{lem:basic_perm}}
\noindent (i). First, note that turning the index $F_{L(w)}$ into the Lehmer code $L(w) := (L(w)_1, \ldots, L(w)_n)$ is done by decoding the factorial number system encoding. Given the definition,
\[
  (a_1, \ldots, a_n) \rightarrow F_{{\bf a}} = a_1 (n-1)! + a_2 (n-2)! + \cdots a_{n-2} (2!) + a_{n-1} (1!)  \ \ \text{width} \ \ 0 \leq a_i \leq n-i,
\]
one notices that taking successive remainders with Euclidean division extracts the values $a_i$:

\begin{algorithm}[H]
\SetAlgoLined
    $x \leftarrow F_{{\bf a}}$\;
    $a_n \leftarrow 0$\;
    \For{$i \leftarrow n-1$ \KwTo $0$}{
      $a_i \leftarrow \texttt{Remainder}(x / (n-i))$\;
      $x \leftarrow x-a_i$\;
      $X \leftarrow \texttt{Quotient}(x / (n-i))$\;
      }
\end{algorithm}
where \texttt{Remainder} and \texttt{Quotient} compute respectively the remainder and the quotient of the Euclidean division. Assuming these operations take constant time, the decoding above takes $O(n)$ operations.

Now, given a Lehmer code $L(w) := (L(w)_1, \ldots, L(w)_n)$ for a permutation, there is a $O(n \log n)$ algorithm to compute the one-line word presentation $w(1)w(2) \cdots w(n)$ of the permutation.

Maintain the ordered sequence $\{1, \ldots, n\}$. By definition, the first element $w(1)$ is equal to $L(w)_1 + 1$. Iteratively, at step $i$, $w(i)$ is equal to the $(L(w)_i+1)$-th element of the set $\{1, \cdots , n\} \setminus \{w(1), \ldots, w(i-1)\}$. By maintaining the ordered sequence $\{1, \ldots , n\} \setminus \{w(1), \ldots, w(i-1)\}$ in a {\em balanced binary tree}, whose inner nodes maintain the number of leaves of the corresponding subtree, we can erase elements and maintain the balanced property in $O(\log n)$ operations, and access the $k$-th element in $O(\log n)$ operations.

\medskip

\noindent (ii). Uses similarly ideas as $(ii)$. The number of inversions can be computed efficiently by maintaining an ordered list of elements in a balanced binary tree. The factorial number system encoding can be computed efficiently using an adapted version of Horner's method for polynomial evaluation.

\medskip

\noindent (iii). Follows by tracking definitions. Consider $w(1) \cdots w(n) \leq_{\operatorname{lex}} w'(1) \cdots w'(n)$, and let $i$ be the smallest index such that $w(i) \neq w'(i)$ ; in which case we have $w(i) < w'(i)$. Having $w(j)=w('j)$ for $1 \leq j < i$ implies that $L(w)_j = L(w')_j$ for $1 \leq j < i$, and $w(i) < w'(i)$ implies that $L(w)_j < L(w')_j$, by definition. To conclude, we notice that the factorial number system gives enough weight to the $i$th entry to guarantee that $F_{L(w)} < F_{L(w')}$, regardless of the values of $w(i+1) \cdots w(n)$ and $w'(i+1)\cdots w'(n)$.
\end{proof}

\section{Non-faithfulness for $B_4$}
\label{app:resultsB4}
Here we give examples of braids that lie in the kernel of the map $\psi:B_4\rightarrow \He_4$ with coefficients modulo $p$ for $p=2,3,4$.

\begin{example}
Over $\Z/2\Z$, the shorter ones were of the form $s_{\idx}^{-1} \Delta^{-6} b$, where $b$ is a positive braid of Garside length $13$. One such example, with $78$ crossings, is:
\begin{align*}
  b = &\sigma_{1} \sigma_{2} \sigma_{3} \sigma_{3} \sigma_{2} \sigma_{3} \sigma_{2} \sigma_{3} \sigma_{1} \sigma_{3} \sigma_{1} \sigma_{2} \sigma_{2} \sigma_{3} \sigma_{1} \sigma_{2} \sigma_{2} \sigma_{3} \sigma_{1} \sigma_{3} \sigma_{1} \sigma_{2} \sigma_{2} \sigma_{1} \sigma_{2} \sigma_{1} \sigma_{2} \sigma_{3} \sigma_{3} \sigma_{1} \sigma_{2} \sigma_{2} \sigma_{3} \sigma_{1} \sigma_{3} \sigma_{1} \sigma_{3} \sigma_{1} \sigma_{2}
\\
  s_{\idx}^{-1} =& s_{10}^{-1} = \sigma^{-1}_{1} \sigma^{-1}_{3} \sigma^{-1}_{2}.
\end{align*}
\end{example}

\begin{example}
For $\Z/3\Z$, the shortest braids in $B_4$ with trivial Hecke representations had $b$ of Garside length $19$. One of them is $s_{12}^{-1} \cdot \Delta^{-14} \cdot b$, a $172$ crossings braid, with:
\begin{align*}
  b =& \sigma_{2} \sigma_{1} \sigma_{3} \sigma_{2} \sigma_{1} \sigma_{2} \sigma_{3} \sigma_{3} \sigma_{1} \sigma_{2} \sigma_{2} \sigma_{3} \sigma_{3} \sigma_{1} \sigma_{2} \sigma_{2} \sigma_{1} \sigma_{1}
  \sigma_{2} \sigma_{3} \sigma_{3} \sigma_{2} \sigma_{3} \sigma_{3} \sigma_{2} \sigma_{3} \sigma_{1} \sigma_{3} \sigma_{1} \sigma_{2} \sigma_{2} \sigma_{1} \sigma_{2} \sigma_{3} \sigma_{3} \sigma_{1} 
  \\
  &\sigma_{2} \sigma_{2} \sigma_{2} \sigma_{1} \sigma_{3} \sigma_{1} \sigma_{2} \sigma_{2} \sigma_{2} \sigma_{3} \sigma_{1} \sigma_{3} \sigma_{1} \sigma_{2} \sigma_{2} \sigma_{1} \sigma_{2} \sigma_{3}
  \sigma_{3} \sigma_{1} \sigma_{2} \sigma_{2} \sigma_{3} \sigma_{3} \sigma_{2} \sigma_{3} \sigma_{1} \sigma_{3} \sigma_{1} \sigma_{2} \sigma_{2} \sigma_{3} \sigma_{3} \sigma_{1} \sigma_{2} \sigma_{2}
  \\
  &\sigma_{3} \sigma_{1} \sigma_{3} \sigma_{2} \sigma_{1} \sigma_{2} \sigma_{3} \sigma_{3} \sigma_{1} \sigma_{2} \sigma_{2} \sigma_{3} \sigma_{1} \sigma_{2}
  \\
  s_{12}^{-1} = &\sigma^{-1}_{2} \sigma^{-1}_{1}.
\end{align*}
\end{example}

\begin{example}
Over $\Z/4\Z$, the shortest braids had $b$ of Garside length $29$. One of them, with a total of $184$ crossings, is $s_{7}^{-1} \cdot \Delta^{-15} \cdot b$ with:
\begin{align*}
  b = &\sigma_{2} \sigma_{3} \sigma_{1} \sigma_{3} \sigma_{1} \sigma_{2} \sigma_{2} \sigma_{3} \sigma_{3} \sigma_{2} \sigma_{1} \sigma_{2} \sigma_{1} \sigma_{2} \sigma_{2} \sigma_{1} \sigma_{2} \sigma_{3}
  \sigma_{3} \sigma_{1} \sigma_{2} \sigma_{2} \sigma_{3} \sigma_{1} \sigma_{3} \sigma_{1} \sigma_{3} \sigma_{1} \sigma_{2} \sigma_{2} \sigma_{3} \sigma_{1} \sigma_{3} \sigma_{1} \sigma_{2} \sigma_{2} \sigma_{2}
  \\
  &\sigma_{3} \sigma_{1} \sigma_{3} \sigma_{1} \sigma_{2} \sigma_{2} \sigma_{3} \sigma_{1} \sigma_{3} \sigma_{1} \sigma_{3} \sigma_{1} \sigma_{2} \sigma_{2} \sigma_{3} \sigma_{3} \sigma_{2} \sigma_{1} \sigma_{2}
  \sigma_{1} \sigma_{2} \sigma_{2} \sigma_{1} \sigma_{2} \sigma_{3} \sigma_{3} \sigma_{1} \sigma_{2} \sigma_{2} \sigma_{3} \sigma_{1} \sigma_{2} \sigma_{2} \sigma_{3} \sigma_{1} \sigma_{3} \sigma_{1} 
  \\
  &\sigma_{2} \sigma_{2} \sigma_{3} \sigma_{1} \sigma_{3} \sigma_{2} \sigma_{1} \sigma_{2} \sigma_{3} \sigma_{3} \sigma_{1} \sigma_{2} \sigma_{2} \sigma_{3} \sigma_{1} \sigma_{3} \sigma_{1} \sigma_{2}
\\
  s_7^{-1} = &\sigma^{-1}_{1} \sigma^{-1}_{3}.
\end{align*}
\end{example}

\end{document}